\newtheorem{theorem}{Theorem}
\newtheorem{lemma}[theorem]{Lemma}
\newtheorem{corollary}[theorem]{Corollary}
\newtheorem{proposition}[theorem]{Proposition}
\newtheorem{example}[theorem]{Example}
\newtheorem{remark}[theorem]{Remark}
\begin{document}
%
% paper title
% can use linebreaks \\ within to get better formatting as desired
\title{Improved Semidefinite Programming Bound on Sizes of Codes}
%
%
% author names and IEEE memberships
% note positions of commas and nonbreaking spaces ( ~ ) LaTeX will not break
% a structure at a ~ so this keeps an author's name from being broken across
% two lines.
% use \thanks{} to gain access to the first footnote area
% a separate \thanks must be used for each paragraph as LaTeX2e's \thanks
% was not built to handle multiple paragraphs
%

\author{Hyun~Kwang~Kim,~\IEEEmembership{Member,~IEEE,}
        and~Phan~Thanh~Toan%,~\IEEEmembership{Life~Fellow,~IEEE}% <-this % stops a space
%\thanks{M. Shell is with the Department
%of Electrical and Computer Engineering, Georgia Institute of Technology, Atlanta,
%GA, 30332 USA e-mail: (see http://www.michaelshell.org/contact.html).}% <-this % stops a space
\thanks{The authors are with the Department of Mathematics, Pohang University of Science and Technology, Pohang 790-784, Korea (e-mail: hkkim@postech.ac.kr; pttoan@postech.ac.kr).}% <-this % stops a space
%\thanks{Manuscript received April 19, 2005; revised January 11, 2007.}
}

\maketitle

\begin{abstract}
Let $A(n,d)$ (respectively $A(n,d,w)$) be the maximum possible number of codewords in a binary code (respectively binary constant-weight $w$ code) of length $n$ and minimum Hamming distance at least $d$. By adding new linear constraints to Schrijver's semidefinite programming bound, which is obtained from block-diagonalising the Terwilliger algebra of the Hamming cube, we obtain two new upper bounds on $A(n,d)$, namely $A(18,8) \leq 71$ and $A(19,8) \leq 131$. Twenty three new upper bounds on $A(n,d,w)$ for $n \leq 28$ are also obtained by a similar way.
\end{abstract}

% IEEEtran.cls defaults to using nonbold math in the Abstract.
% This preserves the distinction between vectors and scalars. However,
% if the journal you are submitting to favors bold math in the abstract,
% then you can use LaTeX's standard command \boldmath at the very start
% of the abstract to achieve this. Many IEEE journals frown on math
% in the abstract anyway.

% Note that keywords are not normally used for peerreview papers.
\begin{IEEEkeywords}
Binary codes, binary constant-weight codes, linear programming, semidefinite programming, upper bound.
\end{IEEEkeywords}

% For peer review papers, you can put extra information on the cover
% page as needed:
% \ifCLASSOPTIONpeerreview
% \begin{center} \bfseries EDICS Category: 3-BBND \end{center}
% \fi
%
% For peerreview papers, this IEEEtran command inserts a page break and
% creates the second title. It will be ignored for other modes.
\IEEEpeerreviewmaketitle

\section{Introduction}
\IEEEPARstart{L}{et} $\mathcal F = \{0,1\}$ and let $n$ be a positive integer. The {\em (Hamming) distance} between two vectors in $\mathcal{F}^n$ is the number of coordinates where they differ. The {\em (Hamming) weight} of a vector in $\mathcal F^n$ is the distance between it and the zero vector. The {\em minimum distance} of a subset of $\mathcal F^n$ is the smallest distance between any two different vectors in that subset. An $(n,d)$ {\em code} is a subset of $\mathcal F^n$ having minimum distance $\geq d$. If $\mathcal C$ is an $(n,d)$ code, then an element of $\mathcal C$ is called a {\em codeword} and the number of codewords in $\mathcal C$ is called the {\em size} of $\mathcal C$.

The largest possible size of an $(n,d)$ code is denoted by $A(n,d)$. The problem of determining the exact values of $A(n,d)$ is one of the most fundamental problems in combinatorial coding theory. Among upper bounds on $A(n,d)$, Delsarte's linear programming bound is quite powerful (see \cite{d} and \cite{ms}). This bound is obtained from block-diagonalising the Bose-Mesner algebra of $\mathcal F^n$. In 2005, by block-diagonalising the Terwilliger algebra (which contains the Bose-Mesner algebra) of $\mathcal F^n$, Schrijver gave a semidefinite programming bound \cite{s}. This bound was shown to be stronger than or as good as Delsarte's linear programming bound. In fact, eleven new upper bounds on $A(n,d)$ were obtained in the paper for $n \leq 28$. In 2002, Mounits, Etzion, and Litsyn added more linear constraints to Delsarte's linear programming bound and obtained new upper bounds on $A(n,d)$ \cite{mel}. In this paper, we construct new linear constraints and show that these linear constraints improve Schrijver's semidefinite programming bound. Among improved upper bounds on $A(n,d)$ for $n \leq 28$, there are two new upper bounds, namely $A(18,8) \leq 71$ and $A(19,8) \leq 131$.

An $(n,d,w)$ {\em constant-weight code} is an $(n,d)$ code such that every codeword has weight $w$. Let $A(n,d,w)$ be the largest possible size of an $(n,d,w)$ constant-weight code. The problem of determining the exact values of $A(n,d,w)$ has its own interest. Upper bounds on $A(n,d,w)$ can even help to improve upper bounds on $A(n,d)$ (for example, see \cite{mel,ms}). There are also Delsarte's linear programming bound and Schrijver's semidefinite programming bound on $A(n,d,w)$ \cite{d,s}. In 2000, Agrell, Vardy, and Zeger added new linear constraints to Delsarte's linear programming bound and improved several upper bounds on $A(n,d,w)$ \cite{avz}. More linear constraints that improve upper bounds on $A(n,d,w)$ can be found in \cite{kkt}. In this paper, we add further new linear constraints to Schrijver's semidefinite programming bound on $A(n,d,w)$ and obtain twenty three new upper bounds on $A(n,d,w)$ for $n \leq 28$.

% The very first letter is a 2 line initial drop letter followed
% by the rest of the first word in caps.
%
% form to use if the first word consists of a single letter:
% \IEEEPARstart{A}{demo} file is ....
%
% form to use if you need the single drop letter followed by
% normal text (unknown if ever used by IEEE):
% \IEEEPARstart{A}{}demo file is ....
%
% Some journals put the first two words in caps:
% \IEEEPARstart{T}{his demo} file is ....
%
% Here we have the typical use of a "T" for an initial drop letter
% and "HIS" in caps to complete the first word.
%\IEEEPARstart{T}{his} demo file is intended to serve as a ``starter file''
%for IEEE journal papers produced under \LaTeX\ using
%IEEEtran.cls version 1.7 and later.
%You must have at least 2 lines in the paragraph with the drop letter
% (should never be an issue)
%I wish you the best of success.

%\hfill mds

%\hfill January 11, 2007

\section{Upper Bounds on $A(n,d)$}

In this section, we improve upper bounds on $A(n,d)$ by adding more linear constraints to Schrijver's semidefinite programming bound, which is obtained from block-diagonalising the Terwilliger algebra of the Hamming cube $\mathcal F^n$. For more details about Schrijver's semidefinite programming bound, see \cite{s}.

\subsection{General Definition of $A(n,d)$ and $A(n,d,w)$}

We first give a general definition. Let $n$ and $d$ be positive integers. For a finite (possibly empty) set $\Lambda = \{(X_i,d_i)\}_{i \in I}$, where each $X_i$ is a vector in $\mathcal F^n$ and each $d_i$ is a nonnegative integer, we define
\begin{eqnarray}
\nonumber A(n,\Lambda,d) & = &\mbox{maximum possible number of}\\
\nonumber & & \mbox{codewords in a binary code of}\\
\nonumber & & \mbox{length } n \mbox{ and minimum distance}\\
\nonumber & & \geq d \mbox{ such that each codeword is}\\
& & \mbox{at distance } d_i \mbox{ from } X_i, \forall i \in I.
\end{eqnarray}

\subsubsection{$|\Lambda|=0$}
If $\Lambda$ is empty, then we get the usual definition of $A(n,d)$.

\subsubsection{$|\Lambda|=1$}
If $\Lambda$ contains only one element, says $(X_1,d_1)$, then $A(n,\Lambda,d)$ is the maximum possible number of codewords in a binary code of length $n$ and minimum distance $\geq d$ such that each codeword is at distance $d_1$ from $X_1$. By translation, we may assume that $X_1$ is the zero vector so that each codeword has weight $d_1$. Therefore,
\begin{eqnarray}
A(n,\Lambda,d) = A(n,d,w),
\end{eqnarray}
where $w=d_1$.

A $(w_1, n_1, w_2, n_2, d)$ {\it doubly-constant-weight code} is an $(n_1 + n_2, d, w_1 + w_2)$ constant-weight code such that every codeword has exactly $w_1$ ones on the first $n_1$ coordinates (and hence has exactly $w_2$ ones on the last $n_2$ coordinates). Let $T(w_1, n_1, w_2, n_2, d)$ be the largest possible size of a $(w_1, n_1, w_2, n_2, d)$ doubly-constant-weight code. Agrell, Vardy, and Zeger showed in \cite{avz} that upper bounds on $T(w_1, n_1, w_2, n_2, d)$ can help improving upper bounds on $A(n,d,w)$. In our result, upper bounds on $T(w_1, n_1, w_2, n_2, d)$ will be used to improve upper bounds on $A(n,d)$. As $A(n,d)$ and $A(n,d,w)$, $T(w_1, n_1, w_2, n_2, d)$ is also a special case of $A(n,\Lambda,d)$.

\subsubsection{$|\Lambda|=2$} If $\Lambda$ contains two elements, then the following proposition shows that $A(n,\Lambda,d)$ is exactly $T(w_1, n_1, w_2, n_2, d)$.

\begin{proposition} \label{mainp}
If $\Lambda = \{(X_1,d_1),(X_2,d_2)\}$, then
\begin{eqnarray}
A(n,\Lambda,d) = T(w_1, n_1, w_2, n_2, d),
\end{eqnarray}
where $n_1 =d(X_1,X_2), n_2 = n - n_1, w_1= \frac{1}{2}(d_1 - d_2 + n_1)$, and $w_2 = \frac{1}{2}(d_1 + d_2 - n_1).$
\end{proposition}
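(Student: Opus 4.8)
The plan is to establish a distance-preserving bijection between the codes counted by $A(n,\Lambda,d)$ and those counted by $T(w_1,n_1,w_2,n_2,d)$. The key geometric fact is that for any codeword $C$ satisfying $d(C,X_1)=d_1$ and $d(C,X_2)=d_2$, its behaviour is entirely controlled by how its support interacts with the set of coordinates where $X_1$ and $X_2$ agree versus disagree. So the first step is to partition the $n$ coordinates into two blocks: let $S$ be the set of coordinates where $X_1$ and $X_2$ differ (so $|S|=d(X_1,X_2)=n_1$) and let $S^c$ be the set where they agree (so $|S^c|=n-n_1=n_2$).

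After that partition, I would reduce to a normalized configuration. By the translation invariance already used in the $|\Lambda|=1$ case (adding a fixed vector to every codeword and to both $X_1,X_2$ preserves all pairwise distances and the minimum distance), I may assume $X_1$ is the zero vector. Then $X_2$ is simply the indicator vector of $S$, of weight $n_1$. For a codeword $C$, write $a=$ (number of ones of $C$ inside $S$) and $b=$ (number of ones of $C$ inside $S^c$). The constraint $d(C,X_1)=d_1$ says $a+b=d_1$, while $d(C,X_2)=d_2$ counts positions where $C$ and $X_2$ differ: inside $S$ that is $n_1-a$ and inside $S^c$ that is $b$, giving $(n_1-a)+b=d_2$. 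Solving these two linear equations is the heart of the computation and yields exactly $a=\tfrac12(d_1-d_2+n_1)=w_1$ and $b=\tfrac12(d_1+d_2-n_1)=w_2$, so every admissible codeword has precisely $w_1$ ones on the $n_1$ coordinates of $S$ and $w_2$ ones on the $n_2$ coordinates of $S^c$.

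This shows that relabelling $S$ as the first $n_1$ coordinates and $S^c$ as the last $n_2$ coordinates carries any code counted by $A(n,\Lambda,d)$ to a $(w_1,n_1,w_2,n_2,d)$ doubly-constant-weight code of the same size, and that the weight of each codeword is $w_1+w_2=d_1$, consistent with the $|\Lambda|=1$ specialization. The minimum distance $\geq d$ is preserved since we only permuted coordinates. Conversely, any doubly-constant-weight code with these parameters, read back through the same coordinate relabelling (and the inverse translation), produces a code at distance $d_1$ from $X_1$ and $d_2$ from $X_2$, so the correspondence is a size-preserving bijection and the two maxima coincide.

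The main obstacle I anticipate is not any deep step but a consistency and well-definedness check: the quantities $w_1=\tfrac12(d_1-d_2+n_1)$ and $w_2=\tfrac12(d_1+d_2-n_1)$ must be nonnegative integers with $w_1\le n_1$ and $w_2\le n_2$, for otherwise $A(n,\Lambda,d)=0$ and the claimed doubly-constant-weight code is vacuous on both sides. I would note that whenever an admissible codeword exists these divisibility and range conditions are automatically forced by the two distance equations, so the identity holds (trivially, as $0=0$) even in the degenerate cases; this verification, rather than the bijection itself, is where care is required.
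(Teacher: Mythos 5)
Your proof is correct and takes essentially the same route as the paper's: translate so that $X_1$ is the zero vector, partition the coordinates into the support of $X_2$ (of size $n_1$) and its complement (of size $n_2$), and solve the two distance equations $a+b=d_1$, $(n_1-a)+b=d_2$ to force every admissible codeword to have exactly $w_1$ ones on the first block and $w_2$ on the second. Your explicit treatment of the converse direction and of the degenerate cases (non-integer or out-of-range $w_1,w_2$) is a reasonable tightening of details the paper leaves implicit.
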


\begin{proof}
Let $n_1 = d(X_1, X_2)$ and $n_2 = n - n_1$. By translation, we may assume that $X_1$ is the zero vector. Hence, $d(X_1,X_2) = wt(X_2)$. Let $Y$ be a vector at distance $d_1$ from $X_1$ and at distance $d_2$ from $X_2$. By rearranging the coordinates, we may assume that
\begin{center}
$\begin{array}{cccc}
X_1 & = & \overbrace{0 \cdots 0 0 \cdots 0}^{n_1} & \overbrace{0 \cdots 0 0 \cdots 0}^{n_2}\\
X_2 & = & 1 \cdots 1 1 \cdots 1 & 0 \cdots 0 0 \cdots 0\\
Y   & = & 0 \cdots 0 \underbrace{1 \cdots 1}_{w_1} & \underbrace{1 \cdots 1}_{w_2} 0 \cdots 0
\end{array}.$
\end{center}
Since $X_1$ is the zero vector, we have
\begin{eqnarray}\label{four}
w_1 + w_2 = wt(Y) = d(Y,X_1) = d_1.
\end{eqnarray}
Also,
\begin{eqnarray}\label{five}
(n_1 - w_1) + w_2 = d(Y,X_2) = d_2.
\end{eqnarray}
(\ref{four}) and (\ref{five}) give $w_1 = \frac{1}{2}(d_1 - d_2 + n_1)$ and $w_2 = \frac{1}{2} (d_1 + d_2 - n_1)$.
\end{proof}

\subsubsection{$|\Lambda|\geq 3$}

It becomes more complicated when $\Lambda$ contains more than two elements. We consider a very special case when $|\Lambda|=4$, which will be used in our improving upper bounds on $A(n,d,w)$ in Section \ref{sba}. Suppose that $\Lambda =\{(X_1, d_1), (X_2,d_2), (X_3,d_3), (X_4, d_4)\}$ satisfies the following conditions.
\begin{itemize}
\item $X_1$ is the zero vector (which can always be assumed).
\item $X_2$ and $X_3$ have the same weight $d_1$.
\item $X_4 = X_2 + X_3$.
\end{itemize}
Then $A(n, \Lambda, d) = T(w_1, n_1, w_2, n_2, w_3, n_3, w_4, n_4, d)$, where $w_i$ and $n_i$ $(1 \leq i \leq 4)$ are determined in the next proposition. The definition of $T(w_1, n_1, w_2, n_2, w_3, n_3, w_4, n_4, d)$ is similar to that of $T(n_1,w_1,n_2,w_2,d)$ (it is the largest possible size of a $(\sum_{i=1}^4 n_i, d)$ code such that on each codeword there are exactly $w_i$ ones on the $n_i$ coordinates $(1 \leq i \leq 4)$).

\begin{proposition}\label{mainp2}
Suppose that $\Lambda =\{(X_i, d_i)\}_{i=1}^4$ satisfies $X_1$ is the zero vector, $wt(X_2) = wt(X_3) = d_1$, and $X_4 = X_2 + X_3$. Then
\begin{eqnarray}
A(n, \Lambda, d) = T(w_1, n_1, w_2, n_2, w_3, n_3, w_4, n_4, d),
\end{eqnarray}
where
$n_1 = n_3 = \frac{1}{2}d(X_2,X_3), n_2 = d_1 - n_1, n_4 = n-n_1-n_2-n_3,$
\begin{eqnarray}
\nonumber w_1 &=& \frac{1}{4} (d_1 - d_2 + d_3 - d_4) + \frac{1}{2} n_1,\\
\nonumber w_2 &=& \frac{1}{4} (d_1 - d_2 - d_3 + d_4) + \frac{1}{2} n_2,\\
\nonumber w_3 &=& \frac{1}{4} (d_1 + d_2 - d_3 - d_4) + \frac{1}{2} n_3,\\
\nonumber w_4 &=& \frac{1}{4} (d_1 + d_2 + d_3 + d_4) + \frac{1}{2} (n_4-n).
\end{eqnarray}
\end{proposition}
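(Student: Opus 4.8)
The plan is to mirror the proof of Proposition \ref{mainp}, replacing the two-region split of the coordinates by a four-region split dictated by the joint pattern of $X_2$ and $X_3$. As there, I would first translate so that $X_1$ is the zero vector, which is harmless since translation preserves all Hamming distances and leaves the hypotheses $wt(X_2)=wt(X_3)=d_1$ and $X_4=X_2+X_3$ intact; after translation the constraint $d(Y,X_1)=d_1$ becomes simply $wt(Y)=d_1$. I would then reorder the coordinates (again distance-preserving, hence cosmetic) so that the code is presented block by block.

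Next I would partition the $n$ coordinates into four blocks according to the pair of bits $(X_2,X_3)$ takes there, namely the blocks where $(X_2,X_3)$ equals $(1,0)$, $(1,1)$, $(0,1)$, $(0,0)$, of sizes $n_1,n_2,n_3,n_4$ respectively. Because $X_4=X_2+X_3$, the vector $X_4$ equals $1$ exactly on the $(1,0)$ and $(0,1)$ blocks, so each of $X_1,X_2,X_3,X_4$ is constant on every block; this is the structural fact that lets the four distance constraints decouple into block data. Counting ones gives $wt(X_2)=n_1+n_2$ and $wt(X_3)=n_2+n_3$, so the hypothesis $wt(X_2)=wt(X_3)=d_1$ forces $n_1=n_3$, while the two vectors differ precisely on blocks $1$ and $3$, giving $d(X_2,X_3)=n_1+n_3$ and hence $n_1=n_3=\tfrac12 d(X_2,X_3)$. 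Combined with $n_2=d_1-n_1$ and $n_1+n_2+n_3+n_4=n$, this recovers exactly the stated values of the $n_i$.

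The heart of the argument is then a change of variables. For a candidate codeword $Y$, let $w_i$ denote the number of ones $Y$ has inside block $i$. Expanding each equation $d(Y,X_j)=d_j$ block by block (on a block where $X_j=0$ the ones of $Y$ count, on a block where $X_j=1$ the zeros of $Y$ count) yields four linear equations in $w_1,w_2,w_3,w_4$ whose coefficient matrix is, up to the block-size constants on the right, the sign pattern $\bigl(\begin{smallmatrix}1&1&1&1\\-1&-1&1&1\\1&-1&-1&1\\-1&1&-1&1\end{smallmatrix}\bigr)$. Its rows are mutually orthogonal, so the system inverts at once by taking the appropriate $\pm$ combinations and dividing by $4$; carrying out these four combinations and substituting the block sizes $n_i$ reproduces precisely the four displayed formulas for $w_1,w_2,w_3,w_4$. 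The only care needed here is tracking which blocks each $X_j$ is supported on.

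Finally I would read off the equivalence: a vector $Y$ satisfies $d(Y,X_j)=d_j$ for all $j$ if and only if it carries exactly $w_i$ ones on block $i$ for each $i$, these $w_i$ being the unique solution just computed. Thus, after the fixed translation and reordering, the feasible codes for $A(n,\Lambda,d)$ coincide with the codes of minimum distance $\ge d$ having prescribed one-counts $w_i$ on the four blocks of sizes $n_i$, and the latter are by definition counted by $T(w_1,n_1,w_2,n_2,w_3,n_3,w_4,n_4,d)$; hence the two maxima agree. I expect no genuine obstacle beyond the linear-algebra bookkeeping, the one point to treat with care being that the $n_i$ and $w_i$ come out as nonnegative integers so that the right-hand doubly-constant-weight problem is the correct finite object, which follows from the hypotheses on the weights and pairwise distances of the $X_j$.
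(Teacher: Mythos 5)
Your proposal is correct and takes essentially the same route as the paper's own proof: the identical four-block partition of the coordinates by the joint pattern of $(X_2,X_3)$, the same derivation of the block sizes $n_1=n_3=\tfrac12 d(X_2,X_3)$, $n_2=d_1-n_1$, $n_4=n-n_1-n_2-n_3$, and the same linear system in the block weights $w_1,w_2,w_3,w_4$, which the paper also simply solves (your remark that the $\pm1$ coefficient matrix has mutually orthogonal rows merely makes the inversion explicit). There are no gaps.
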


\begin{proof}
Suppose that $Z$ is a vector at distance $d_i$ from $X_i$ $(1 \leq i \leq 4)$. By rearranging the coordinates, we may assume the following.
\begin{center}
$\begin{array}{rl}
X_2 = \overbrace{1 \cdots \cdots \cdots 1}^{n_1} \overbrace{1 \cdots \cdots \cdots 1 }^{n_2} & \overbrace{0 \cdots \cdots \cdots 0}^{n_3} \overbrace{0 \cdots \cdots \cdots 0}^{n_4}\\
X_3  = 0 \cdots \cdots \cdots 0 ~\!1 \cdots \cdots \cdots 1 &1 \cdots \cdots \cdots 1 ~\! 0 \cdots \cdots \cdots 0\\
Z = 0 \cdots 0 \underbrace{1 \cdots 1}_{w_1}  \underbrace{1 \cdots 1}_{w_2} 0 \cdots 0 & 0 \cdots 0  \underbrace{1 \cdots 1}_{w_3} \underbrace{1 \cdots 1}_{w_4} 0 \cdots 0
\end{array}$
\end{center}
Let $n_1, n_2, n_3, n_4$ be as in the above figure. Since $n_1 + n_3 = d(X_2, X_3)$ and $X_2, X_3$ have the same weight, $n_1 = n_3 = \frac{1}{2} d(X_2, X_3)$. Now $n_1 + n_2 = wt(X_2) = d_1$. Therefore, $n_2 = d_1 - n_1$ and $n_4 = n- n_1 - n_2 - n_3$. We have
\begin{center}
$\left\{\begin{array}{lcl}
w_1 + w_2 + w_3 + w_4 = wt(Z) &=& d(Z,X_1) = d_1\\
(n_1-w_1) + (n_2-w_2) + w_3 + w_4 &=& d(Z,X_2) = d_2\\
w_1 + (n_2 - w_2) + (n_3 - w_3) + w_4 &=& d(Z,X_3) = d_3\\
(n_1-w_1) + w_2 + (n_3 - w_3) + w_4 &=& d(Z,X_4) = d_4
\end{array} \right..$
\end{center}
Solving these equations, we get $w_i$ $(1 \leq i \leq 4)$ as desired.
\end{proof}

\subsection{Schrijver's Semidefinite Programming Bound on $A(n,d)$}

Let $\mathcal P$ be the collection of all subsets of $\{1, 2, \ldots, n\}$. Each vector in $\mathcal F^n$ can be identified with its support (the support of a vector is the set of coordinates at which the vector has nonzero entries). With this identification, a code is a subset of $\mathcal P$ and the (Hamming) distance between two subsets $X$ and $Y$ in $\mathcal P$ is $d(X,Y) = |X \Delta Y|$.
Let $\mathcal C$ be an $(n,d)$ code. For each $i, j,$ and $t$, define
\begin{eqnarray}\label{xtij}
x^t_{i,j} = \frac{1}{|\mathcal C| \left( n \atop i-t,j-t,t \right)} \lambda^t_{i,j},
\end{eqnarray}
where $\left(a \atop b_1, b_2, \ldots, b_m \right)$ denotes the number of pairwise disjoint subsets of sizes $b_1, b_2, \ldots, b_m$ respectively of a set of size $a$, and $\lambda^t_{i,j}$ denotes the number of triples $(X,Y,Z) \in \mathcal C^3$ with $|X\Delta Y| = i, |X \Delta Z| = j,$ and $|(X \Delta Y) \cap (X \Delta Z)| = t$, or equivalently, with $|X \Delta Y| = i$, $|X \Delta Z| = j$, and $|Y \Delta Z| = i+j-2t$. Set $x^t_{i,j}=0$ if $\left( n \atop i-t,j-t,t \right) = 0$.

The key part of Schrijver's semidefinite programming bound is that for each $k = 0, 1, \ldots, \lfloor \frac{n}{2} \rfloor$, the matrices
\begin{eqnarray}\label{mt1}
\left( \sum_{t=0}^n \beta^t_{i,j,k} x^t_{i,j} \right)_{i,j = k}^{n-k}
\end{eqnarray}
and
\begin{eqnarray}\label{mt2}
\left(\sum_{t=0}^n \beta^t_{i,j,k}(x^0_{i+j-2t,0} - x^t_{i,j}) \right)_{i,j=k}^{n-k}
\end{eqnarray}
are positive semidefinite, where $\beta^t_{i,j,k}$ is given by
\begin{eqnarray}
\beta^t_{i,j,k} & = & \sum_{u=0}^n (-1)^{u-t}\left(u \atop t \right) \left( n - 2k \atop u-k \right) \left(n-k-u \atop i-u\right) \left(n-k-u \atop j-u \right).
\end{eqnarray}
Since
\begin{eqnarray}\label{tong}
|\mathcal C| = \sum_{i=0}^n \left(n \atop i \right) x^0_{i,0},
\end{eqnarray}
an upper bound on $A(n,d)$ can be obtained by considering the $x^t_{i,j}$ as variables and by
\begin{eqnarray} \label{max}
\mbox{maximizing } \sum_{i=0}^n \left(n \atop i \right) x^0_{i,0}
\end{eqnarray}
subject to the matrices (\ref{mt1}) and (\ref{mt2}) are positive semidefinite for each $k = 0, 1, \ldots, \lfloor \frac{n}{2} \rfloor$ and subject to the following conditions on the $x^t_{i,j}$ (see \cite{s}).

\begin{itemize}
\item [(i)] $x^0_{0,0} = 1$.
\item [(ii)] $0 \leq x^t_{i,j} \leq x^0_{i,0}$ \mbox{ and } $x^0_{i,0} + x^0_{j,0} \leq 1 + x^t_{i,j}$ for all $i,j,t \in \{0,1, \ldots, n\}$.
\item [(iii)] $x^t_{i,j} = x^{t'}_{i',j'}$ if $(i',j',i'+j'-2t')$ is a permutation of $(i,j,i+j-2t)$.
\item [(iv)] $x^t_{i,j} = 0$ if $\{i, j, i+j-2t\} \cap \{1, 2, \ldots, d-1\} \not = \emptyset$.
\end{itemize}

\subsection{Improved Schrijver's Semidefinite Programming Bound on $A(n,d)$}

\subsubsection{New Constraints for $x^t_{i,j}$}
Let $\mathcal C$ be an $(n,d)$ code and let $x^t_{i,j}$ be defined by (\ref{xtij}).
\begin{theorem}\label{kq1}
For all $i, j, t \in \{0, 1, \ldots, n\}$ with $\left(n \atop i-t,j-t,t\right) \not = 0$,
\begin{eqnarray}
x^t_{i,j} \leq \frac{T(t,i,j-t,n-i,d)}{\left( i \atop t \right) \left(n-i \atop j-t \right)} x^0_{i,0}.
\end{eqnarray}
\end{theorem}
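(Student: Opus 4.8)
The plan is to fix a triple $(i,j,t)$ and give a combinatorial interpretation of $x^t_{i,j}$ in terms of counting, then relate the relevant count to a doubly-constant-weight code. Recall that $\lambda^t_{i,j}$ counts triples $(X,Y,Z) \in \mathcal C^3$ with $|X \Delta Y| = i$, $|X \Delta Z| = j$, and $|(X \Delta Y) \cap (X \Delta Z)| = t$. First I would fix a single codeword $X \in \mathcal C$ and count, over pairs $(Y,Z)$, how many satisfy these constraints; by the symmetry used to define $x^t_{i,j}$ (translation so that $X$ is the zero vector), this count is independent of $X$, so $\lambda^t_{i,j} = |\mathcal C| \cdot N$, where $N$ is the number of pairs $(Y,Z) \in \mathcal C^2$ with $d(X,Y)=i$, $d(X,Z)=j$, and $|(X\Delta Y)\cap(X\Delta Z)|=t$, relative to a fixed $X$.

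The key step is to translate so that $X$ is the zero vector. Then $Y$ has weight $i$, $Z$ has weight $j$, and the condition $|(X \Delta Y) \cap (X \Delta Z)| = |Y \cap Z| = t$ says that the supports of $Y$ and $Z$ overlap in exactly $t$ positions. Now I would partition the $n$ coordinates according to membership in the support of $Y$: there are $i$ coordinates inside $\mathrm{supp}(Y)$ and $n - i$ coordinates outside. For the pair $(Y,Z)$ to be counted, $Z$ must place exactly $t$ of its ones on the $i$ coordinates of $\mathrm{supp}(Y)$ and exactly $j - t$ of its ones on the remaining $n - i$ coordinates. This is precisely the doubly-constant-weight structure: viewing $\{Y, Z\}$ (together with the all-ones reference vector $X_2 = Y$) as anchors, the set of admissible $Z$'s forms a $(t, i, j-t, n-i, d)$ doubly-constant-weight code, so the number of distinct such $Z$ is at most $T(t, i, j-t, n-i, d)$. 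This is the heart of the argument and where Proposition~\ref{mainp} is invoked implicitly to justify that the constraint ``$d(X,Z)=j$, $d(Y,Z)=i+j-2t$, $Z \in \mathcal C$'' defines a doubly-constant-weight code.

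To finish, I would count the contribution to $\lambda^t_{i,j}$. For each fixed $X$ and each fixed $Y$ with $d(X,Y) = i$ (of which there are $x^0_{i,0} \left( n \atop i \right)$ summed over the code via (\ref{tong}), but locally we count $Y$'s through $\mathcal C$), the number of valid $Z$ is bounded by $T(t,i,j-t,n-i,d)$. However, the combinatorial factor $\left( i \atop t \right) \left( n-i \atop j-t \right)$ appearing in the denominator records the number of \emph{placements} of a support pattern of type $(t, j-t)$ relative to $\mathrm{supp}(Y)$, so after normalising $\lambda^t_{i,j}$ by the multinomial $\left( n \atop i-t, j-t, t \right)$ as in (\ref{xtij}) and comparing with the normalisation of $x^0_{i,0}$, the bound $x^t_{i,j} \le \dfrac{T(t,i,j-t,n-i,d)}{\left( i \atop t \right)\left( n-i \atop j-t \right)} x^0_{i,0}$ falls out. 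The main obstacle will be keeping the normalising multinomial and binomial coefficients straight: one must verify that the ratio of the counting factor $\left( n \atop i-t,j-t,t \right)$ for $x^t_{i,j}$ to the factor $\left( n \atop i \right)$ for $x^0_{i,0}$ equals exactly $\left( i \atop t \right)\left( n-i \atop j-t \right)$, which is a routine but essential multinomial identity, and then to check that the doubly-constant-weight bound applies to the $Z$-count \emph{after} factoring out these placement multiplicities.
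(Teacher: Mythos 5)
Your core argument is the same as the paper's: fix a pair $(X,Y)\in\mathcal C^2$ with $|X\Delta Y|=i$, observe that the admissible codewords $Z$ (those with $|X\Delta Z|=j$ and $|(X\Delta Y)\cap(X\Delta Z)|=t$) become, after translating by $X$, a set of weight-$j$ words having exactly $t$ ones on the support of $Y+X$ and $j-t$ ones elsewhere --- i.e.\ a $(t,i,j-t,n-i,d)$ doubly-constant-weight code --- so there are at most $T(t,i,j-t,n-i,d)$ of them; then use the identity $\left( n \atop i-t,j-t,t\right) = \left(n \atop i\right)\left(i \atop t\right)\left(n-i \atop j-t\right)$ to convert $\lambda^t_{i,j}\leq T(t,i,j-t,n-i,d)\,\lambda^0_{i,0}$ into the stated bound. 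This is precisely the paper's proof, which invokes Proposition \ref{mainp} with $\Lambda=\{(X,j),(Y,i+j-2t)\}$ and carries out the same normalization.

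However, your opening step is false as stated. You claim that the number $N$ of pairs $(Y,Z)$ satisfying the constraints relative to a fixed $X$ is independent of $X$, ``by the symmetry used to define $x^t_{i,j}$,'' so that $\lambda^t_{i,j}=|\mathcal C|\cdot N$. Translation by $X$ is an isometry of $\mathcal F^n$, but it carries $\mathcal C$ to the \emph{different} code $\mathcal C+X$; an arbitrary $(n,d)$ code need not be distance-invariant, and the per-codeword counts genuinely vary. For instance, for $\mathcal C=\{000000,111000,000111\}$ and $(i,j,t)=(3,3,0)$, the count relative to $X=000000$ is $2$, while the count relative to $X=111000$ is $0$. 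Fortunately this homogeneity claim is never needed: instead of fixing $X$ alone, fix the pair $(X,Y)$, bound the number of admissible $Z$ by $T(t,i,j-t,n-i,d)$ for each such pair (translation is legitimate here, since it preserves the minimum distance of $\mathcal C$ and the three constraints on $Z$), and sum over all $\lambda^0_{i,0}$ such pairs. Your final paragraph effectively does this, so the repair is only to delete the claim $\lambda^t_{i,j}=|\mathcal C|\cdot N$ and sum per pair rather than per codeword. One further small slip: the number of pairs $(X,Y)$ with $|X\Delta Y|=i$ is $\lambda^0_{i,0}=|\mathcal C|\left(n \atop i\right)x^0_{i,0}$, which follows directly from the definition (\ref{xtij}), not from (\ref{tong}); the factor $|\mathcal C|$ is missing in your parenthetical count.
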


\begin{proof}
Recall that $\lambda^t_{i,j}$ is the number of triples $(X,Y,Z) \in \mathcal C^3$ with $|X \Delta Y| = i$, $|X \Delta Z| = j$, and $|Y \Delta Z| = i+j-2t$. For any pair $(X,Y) \in \mathcal C^2$ with $|X \Delta Y|= i$, the number of $Z \in \mathcal C$ such that $|Z \Delta X|=j$ and $|Z \Delta Y| = i+j-2t$ is upper bounded by $A(n,\Lambda,d)$, where $\Lambda = \{(X,j),(Y,i+j-2t)\}$. By Proposition \ref{mainp},
\begin{eqnarray}
A(n,\Lambda,d) = T(t,i,j-t,n-i,d).
\end{eqnarray}
Since the number of pairs $(X,Y) \in \mathcal C^2$ such that $|X \Delta Y| = i$ is $\lambda^0_{i,0}$,
\begin{eqnarray}
\lambda^t_{i,j} \leq T(t,i,j-t,n-i,d) \lambda^0_{i,0}.
\end{eqnarray}
Therefore,
\begin{eqnarray}
\nonumber x^t_{i,j} & = & \frac{1}{|\mathcal C| \left(n \atop i-t, j-t, t\right)} \lambda^t_{i,j} \\
\nonumber &\leq&  \frac{T(t,i,j-t,n-i,d)}{|\mathcal C| \left(n \atop i-t, j-t, t\right)} \lambda^0_{i,0}\\
\nonumber & = & \frac{T(t,i,j-t,n-i,d) \left(n \atop i \right)}{\left(n \atop i-t, j-t, t\right)} x^0_{i,0} \\
\nonumber & = & \frac{T(t,i,j-t,n-i,d)}{\left(i \atop t \right) \left(n-i \atop j-t \right)} x^0_{i,0}.
\end{eqnarray}
\end{proof}

The following corollary was used in \cite{s}.

\begin{corollary}\label{hq3}
For each $j \in \{0, 1, \ldots, n\}$,
\begin{eqnarray}
{\left(n \atop j \right)}x^0_{0,j} \leq {A(n,d,j)}.
\end{eqnarray}
\end{corollary}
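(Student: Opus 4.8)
The plan is to obtain this corollary as a direct specialization of Theorem \ref{kq1}. In that theorem I would set $i = 0$ and $t = 0$ while keeping the third index equal to $j$. First I must check that the hypothesis $\binom{n}{i-t,\,j-t,\,t} \neq 0$ holds at this point: with $i = t = 0$ the multinomial coefficient is $\binom{n}{0,\,j,\,0} = \binom{n}{j}$, which is nonzero for every $j \in \{0,1,\ldots,n\}$, so the theorem does apply.

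Substituting $i = t = 0$ into the bound of Theorem \ref{kq1} gives $x^0_{0,j} \le \frac{T(0,0,j,n,d)}{\binom{0}{0}\binom{n}{j}}\, x^0_{0,0}$. Next I would simplify the right-hand side using facts already available: condition (i) gives $x^0_{0,0} = 1$, and $\binom{0}{0} = 1$. This reduces the inequality to $x^0_{0,j} \le \frac{T(0,0,j,n,d)}{\binom{n}{j}}$, so it remains only to identify the doubly-constant-weight quantity $T(0,0,j,n,d)$.

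The key observation is that $T(0,0,j,n,d)$ is a degenerate doubly-constant-weight code: here the first block has $n_1 = 0$ coordinates carrying $w_1 = 0$ ones, so the constraint on the first block is vacuous, and a codeword is simply a weight-$j$ vector of length $n$ satisfying the minimum-distance requirement. Hence $T(0,0,j,n,d) = A(n,d,j)$, and plugging this in yields $\binom{n}{j}\, x^0_{0,j} \le A(n,d,j)$, as desired.

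I do not expect a serious obstacle here: the only things to verify are the nonvanishing of the multinomial coefficient (so that Theorem \ref{kq1} applies) and the degeneracy identity $T(0,0,j,n,d) = A(n,d,j)$, both of which are immediate. As a sanity check, one could instead give a self-contained double-counting argument. Unwinding the definition (\ref{xtij}) shows $\lambda^0_{0,j}$ counts triples $(X,Y,Z)\in\mathcal C^3$ with $X=Y$ and $d(X,Z)=j$, so $\binom{n}{j}\, x^0_{0,j} = \frac{1}{|\mathcal C|}\,\#\{(X,Z)\in\mathcal C^2 : d(X,Z)=j\}$. For each fixed $X \in \mathcal C$, the set of $Z \in \mathcal C$ at distance $j$ from $X$ is a code of minimum distance $\ge d$ all of whose words lie at distance $j$ from $X$, which is the $|\Lambda|=1$ case and hence has at most $A(n,d,j)$ elements; summing over the $|\mathcal C|$ choices of $X$ recovers the same bound.
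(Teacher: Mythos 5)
Your proof is correct and follows essentially the same route as the paper: both specialize Theorem \ref{kq1} at $i=t=0$, use $x^0_{0,0}=1$, and identify the degenerate quantity $T(0,0,j,n,d)$ with $A(n,d,j)$. Your explicit verification of the nonvanishing multinomial coefficient and the closing double-counting sanity check are additions the paper omits, but they do not change the argument.
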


\begin{proof}
By Theorem \ref{kq1}, we have
\begin{eqnarray}
x^0_{0,j} \leq \frac{T(0,0,j,n,d)}{\left(0 \atop 0\right)\left(n \atop j\right)}x^0_{0,0} = \frac{A(n,d,j)}{\left(n \atop j \right)}.
\end{eqnarray}
\end{proof}

\begin{remark}
Theorem \ref{kq1} improve the condition $x^{t}_{i,j} \leq x^0_{i,0}$ in Schrijver's semidefinite programming bound since $\frac{T(t,i,j-t,n-i,d)}{\left(i \atop t \right) \left(n-i \atop j-t\right)} \leq 1$ (in fact, $\frac{T(t,i,j-t,n-i,d)}{\left(i \atop t \right) \left(n-i \atop j-t\right)}$ is much less than $1$ in general). Similarly, Corollary \ref{hq3} in many cases (of $i$ and $j$) improve the condition $x^0_{i,0} + x^0_{j,0} \leq 1 + x^t_{i,j}$ since $x^0_{u,0} = x^0_{0,u} = \frac{A(n,d,u)}{\left( n \atop u \right)}$ is much less than $\frac{1}{2}$ in general.
\end{remark}

\subsubsection{Delsarte's Linear Programming Bound and Its Improvements}

Let $\mathcal C$ be an $(n,d)$ code, the {\em distance distribution} $\{B_i\}_{i=0}^n$ of $\mathcal C$ is defined by
\begin{eqnarray}
B_i = \frac{1}{|\mathcal C|} \cdot |\{(X,Y) \in \mathcal C^2 \mid |X \Delta Y| = i\}|.
\end{eqnarray}
By definition,
\begin{eqnarray}
\left(n \atop i \right) x^0_{i,0} = B_i
\end{eqnarray}
for each $i = 0, 1, \ldots, n$. Hence, $\{\left(n \atop i \right) x^0_{i,0}\}_{i=0}^n$ is the distance distribution on $\mathcal C$. The following result can be found for example in \cite{bbm} or \cite{kkt}.

\begin{theorem}\label{del} (Delsarte's linear programming bound and its improvements).
Let $\mathcal C$ be an $(n,d)$ code with distance distribution
$\{B_i\}_{i=0}^n = \{\left(n \atop i \right) x^0_{i,0}\}_{i=0}^n$. For $k = 1,2, \ldots, n$,
\begin{eqnarray}\label{thchan}
\sum_{i=1}^n P_k(n;i) B_i  \geq - \left(n \atop k \right),
\end{eqnarray}
where $P_k(n;x)$ is the Krawtchouk polynomial given by
\begin{eqnarray}
P_k(n;x) = \sum_{j=0}^k (-1)^j \left(x \atop j \right) \left( n- x \atop k-j \right).
\end{eqnarray}
If $M = |\mathcal C|$ is odd, then
\begin{eqnarray}\label{thle}
\sum_{i=1}^n P_k(n;i) B_i \geq -\left(n \atop k \right) + \frac{1}{M}\left(n \atop k \right).
\end{eqnarray}
If $M = |\mathcal C| \equiv 2$ (mod $4$), then there exists $t \in \{0, 1, \ldots, n\}$ such that
\begin{eqnarray}\label{th3}
\sum_{i=1}^n P_k(n;i) B_i  \geq -\left(n \atop k \right)+ \frac{2}{M} \left[ \left(n \atop k \right) + P_k(n;t) \right].
\end{eqnarray}

\end{theorem}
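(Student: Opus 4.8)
The plan is to derive all three inequalities from a single non-negative character sum and then refine it by a parity argument keyed to the residue of $M = |\mathcal C|$ modulo $4$. Identifying each codeword with its support and writing $u \cdot X = |u \cap X| \bmod 2$ for $u, X \subseteq \{1, \ldots, n\}$, I would set $S_u = \sum_{X \in \mathcal C} (-1)^{u \cdot X}$ for each $u \in \mathcal F^n$. The combinatorial engine is the identity $\sum_{wt(u)=k} (-1)^{u \cdot Z} = P_k(n; wt(Z))$, which drops straight out of the definition of $P_k$ by splitting the coordinates of $u$ into those inside and those outside $Z$. Expanding $S_u^2 = \sum_{X,Y \in \mathcal C} (-1)^{u \cdot (X \Delta Y)}$ and grouping the pairs $(X,Y)$ by their distance $|X \Delta Y| = i$ then gives
\begin{eqnarray}
\nonumber \sum_{wt(u)=k} S_u^2 & = & \sum_{X,Y \in \mathcal C} P_k(n; |X \Delta Y|) = M \sum_{i=0}^n P_k(n;i) B_i.
\end{eqnarray}
The left side is a sum of squares, hence $\geq 0$; isolating the $i=0$ term (where $B_0 = 1$ and $P_k(n;0) = \binom{n}{k}$) yields the basic bound (\ref{thchan}). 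So the whole problem reduces to lower-bounding $\sum_{wt(u)=k} S_u^2$ as sharply as the hypotheses on $M$ permit.

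The refinements rest on writing $S_u = M - 2 b_u$ with $b_u = |\{X \in \mathcal C : u \cdot X = 1\}|$, so that $S_u \equiv M \pmod 2$ for every $u$. When $M$ is odd each $S_u$ is an odd integer, so $S_u^2 \geq 1$ and $\sum_{wt(u)=k} S_u^2 \geq \binom{n}{k}$; dividing by $M$ and isolating $i=0$ produces (\ref{thle}). The main obstacle is the case $M \equiv 2 \pmod 4$, where each $S_u$ is merely even, individual terms may vanish, and the per-term bound collapses. Here I would restrict attention to the $u$ with $S_u \equiv 2 \pmod 4$, for which $S_u^2 \geq 4$. Setting $T_u = S_u/2 = M/2 - b_u$ and using that $M/2$ is odd, one checks that $S_u \equiv 2 \pmod 4$ is equivalent to $b_u$ being even.

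The decisive point is the parity identity $b_u \equiv \sum_{X \in \mathcal C} (u \cdot X) \equiv u \cdot \sigma \pmod 2$, where $\sigma = \sum_{X \in \mathcal C} X \in \mathcal F^n$ is the coordinatewise mod-$2$ sum of all codewords. Thus $b_u$ is even exactly when $u \cdot \sigma = 0$, and by the same inside/outside splitting the number of weight-$k$ vectors $u$ with $u \cdot \sigma = 0$ is $\frac{1}{2}[\binom{n}{k} + P_k(n;t)]$, where $t = wt(\sigma)$ is the weight of this distinguished vector. Bounding $\sum_{wt(u)=k} S_u^2 \geq 4 \cdot \frac{1}{2}[\binom{n}{k} + P_k(n;t)]$, dividing by $M$, and isolating $i=0$ delivers (\ref{th3}) with this explicit $t$, so the existential $t$ in the statement is realized concretely as $wt(\sum_{X \in \mathcal C} X)$. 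I expect the only delicate steps to be verifying the parity identity for $\sigma$ and the count $\frac{1}{2}[\binom{n}{k} + P_k(n;t)]$; the remaining manipulations are bookkeeping.
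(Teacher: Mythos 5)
Your proof is correct: the character-sum identity $\sum_{wt(u)=k} S_u^2 = M\sum_i P_k(n;i)B_i$, the parity refinement via $S_u = M-2b_u$ (odd $M$ gives $S_u^2\geq 1$; $M\equiv 2 \pmod 4$ gives $S_u\equiv 2 \pmod 4$ exactly when $u\cdot\sigma=0$), and the realization of $t$ as $wt\bigl(\sum_{X\in\mathcal C}X\bigr)$ all check out. The paper offers no proof of this theorem — it cites \cite{bbm} and \cite{kkt} — and your argument is essentially the classical one from those references, so it matches the intended proof.
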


\subsubsection{Linear Constraints on Distance Distributions $\{B_i\}_{i=0}^n$}

If some linear constraints are used to improve Delsarte's linear programming bound on $A(n,d)$, then these constraints can still be added to Schrijver's semidefinite programming bound to improve upper bounds on $A(n,d)$. The following constraints are due to Mounits, Etzion, and Litsyn (see \cite[Theorems 9 and 10]{mel}).

\begin{theorem}\label{kq2}
Let $\mathcal C$ be an $(n,d)$ code with distance distribution $\{B_i\}_{i=0}^n$. Suppose that $d$ is even and $\delta=d/2$. Then
\begin{eqnarray}
B_{n-\delta} + \left\lfloor \frac{n}{\delta}\right\rfloor \sum_{i<\delta} B_{n-i} \leq \left\lfloor \frac{n}{\delta} \right\rfloor
\end{eqnarray}
and
\begin{eqnarray}
B_{n-\delta-i} + [A(n,d,\delta+i) - A(n-\delta+i,d,\delta+i)] B_{n-\delta+i} + A(n,d,\delta+i) \sum_{j>i} B_{n-\delta+j} \leq A(n,d,\delta+i)
\end{eqnarray}
for all $i = 1, 2, \ldots, \delta-1$.
\end{theorem}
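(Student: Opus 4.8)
The plan is to derive both inequalities from a single pointwise estimate that is then averaged over the code. Writing $M=|\mathcal C|$, recall that $MB_k$ is the number of ordered pairs $(X,Y)\in\mathcal C^2$ with $d(X,Y)=k$, so that $MB_k=\sum_{X\in\mathcal C}|\{Y\in\mathcal C:d(X,Y)=k\}|$. Thus it suffices to fix a codeword $X$, bound the counts of codewords $Y$ lying at large distance from $X$, and sum the resulting local inequalities over $X$ before dividing by $M$. To make the geometry transparent I would first translate by $X$ (replacing $\mathcal C$ by $\mathcal C\oplus X$, which preserves all pairwise distances and hence every $B_k$), so that $X$ becomes the zero vector and $d(X,Y)$ becomes $wt(Y)$; passing to complements then turns ``$Y$ far from $X$'' into ``$\bar Y$ light,'' where $\bar Y$ has weight $n-d(X,Y)$ and $d(\bar Y,\bar Y')=d(Y,Y')\ge d=2\delta$.

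For the first inequality, fix $X$ and set $a_X=|\{Y:d(X,Y)=n-\delta\}|$ (so $\bar Y$ has weight $\delta$) and $b_X=|\{Y:d(X,Y)\ge n-\delta+1\}|$ (so $\bar Y$ has weight $\le\delta-1$); summing these over $X$ produces $MB_{n-\delta}$ and $M\sum_{i<\delta}B_{n-i}$ respectively. The key local claim is $a_X+\lfloor n/\delta\rfloor\, b_X\le\lfloor n/\delta\rfloor$. I would prove it from the triangle-type bound $d(\bar Y,\bar Y')\le wt(\bar Y)+wt(\bar Y')$: two complements of weight $\le\delta-1$ would lie at distance $\le 2\delta-2<d$, so $b_X\le 1$; a weight-$\delta$ complement together with a weight-$\le\delta-1$ one would be at distance $\le 2\delta-1<d$, so $b_X=1$ forces $a_X=0$; and when $b_X=0$, two weight-$\delta$ complements at distance $\ge 2\delta$ must have disjoint supports, whence the $a_X$ of them are pairwise disjoint $\delta$-subsets of $\{1,\dots,n\}$ and $a_X\le\lfloor n/\delta\rfloor$. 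Checking the two cases $b_X\in\{0,1\}$ gives the claim, and averaging finishes the proof.

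The second inequality follows the same template with a finer case analysis keyed to $w=\delta+i$ and $s=\delta-i$ (so $s+w=d$). Fixing $X$, I would let $p_X,q_X,r_X$ count the $Y$ whose complements $\bar Y$ have weight exactly $w$, exactly $s$, and at most $s-1$; these are the codewords at distance $n-\delta-i$, at distance $n-\delta+i$, and at distance $\ge n-\delta+i+1$ from $X$, so summing over $X$ recovers $MB_{n-\delta-i}$, $MB_{n-\delta+i}$, and $M\sum_{j>i}B_{n-\delta+j}$. The target local inequality is $p_X+[A(n,d,w)-A(n-\delta+i,d,w)]\,q_X+A(n,d,w)\,r_X\le A(n,d,w)$. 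Everything rests on one principle: any two of these complements, of weights $t$ and $t'$, satisfy $2\delta=d\le d(\bar Y,\bar Y')\le t+t'$, so they cannot coexist when $t+t'<2\delta$, and when $t+t'=2\delta$ their supports must be disjoint. Reading off the weight sums this forces $r_X\le 1$ and $q_X\le 1$, shows that an $r$-type ($t\le s-1$) excludes both $q$- and $p$-types, and shows that a $q$-type (weight $s$) coexists with $p$-types (weight $w$) only with disjoint supports. To bound $p_X$ I would use a constant-weight argument: the weight-$w$ complements form an $(n,d,w)$ constant-weight code, so $p_X\le A(n,d,w)$ when $q_X=0$; and when $q_X=1$ all weight-$w$ complements avoid the $s$ coordinates of the unique weight-$s$ complement, hence restrict to an $(n-\delta+i,d,w)$ constant-weight code, giving $p_X\le A(n-\delta+i,d,w)$. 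Substituting into the three surviving cases $(r_X=1)$, $(r_X=0,q_X=1)$, $(r_X=0,q_X=0)$ verifies the local inequality, and averaging over $X$ gives the result.

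The routine parts are the translation/averaging bookkeeping and the elementary bound $d(\bar Y,\bar Y')\le wt(\bar Y)+wt(\bar Y')$. I expect the main obstacle to be the bookkeeping of the second case analysis, specifically verifying the mutual exclusions among the $p$-, $q$-, and $r$-type codewords and correctly identifying the ambient length ($n$ versus $n-\delta+i$) of the constant-weight sub-code controlling $p_X$, since that is exactly what produces the coefficient $A(n,d,\delta+i)-A(n-\delta+i,d,\delta+i)$.
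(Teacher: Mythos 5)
Your proposal is correct, but note that the paper itself offers no proof of this statement: Theorem \ref{kq2} is imported verbatim from Mounits, Etzion, and Litsyn (\cite[Theorems 9 and 10]{mel}), so the only "proof" in the paper is that citation. Your argument is a sound, self-contained reconstruction of the standard proof: the local counts $a_X,b_X$ (resp.\ $p_X,q_X,r_X$) are handled correctly, the exclusion rules all follow from $d\le d(\bar Y,\bar Y')\le wt(\bar Y)+wt(\bar Y')$ with equality forcing disjoint supports, the bound $a_X\le\lfloor n/\delta\rfloor$ is exactly $A(n,2\delta,\delta)=\lfloor n/\delta\rfloor$, and the shortening to length $n-(\delta-i)=n-\delta+i$ when $q_X=1$ is precisely what produces the coefficient $A(n,d,\delta+i)-A(n-\delta+i,d,\delta+i)$; summing over $X$ and dividing by $|\mathcal C|$ then gives both inequalities. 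It is worth observing that this "bound $|S_k(X)|$ pointwise, then average over codewords" template is the same one the paper does use for its own new constraints (Lemma \ref{lmmoi15}, Lemma \ref{lmQji}, and Theorems \ref{dlmot1} and \ref{tr4}), so your proof also makes transparent why Theorem \ref{kq2} and the paper's constant-weight analogues are instances of a single method. One small point of care: in the case $q_X=1$ you should note $A(n-\delta+i,d,\delta+i)\le A(n,d,\delta+i)$ (padding with zeros), so the bracketed coefficient is nonnegative; this is not needed for the summation itself but confirms the inequality is a genuine strengthening of the trivial bound.
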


\begin{table}[!t]
\renewcommand{\arraystretch}{1.3}
\caption{Improved upper bounds for $A(n,d)$}
\label{table_example}
\centering
\begin{tabular}{|r|r|r|r|r|r|r|}
\hline
  &   & \mbox{best}  & \mbox{best upper} &              &                   & \mbox{} \\
  &   & \mbox{lower} & \mbox{bound}      & \mbox{new}   & \mbox{improved}   & \mbox{} \\
  &   & \mbox{bound} & \mbox{previously} & \mbox{upper} & \mbox{Schijver}   & \mbox{Schrijver} \\
n & d & \mbox{known} & \mbox{known}      & \mbox{bound} & \mbox{bound}      & \mbox{bound} \\
\hline
\hline
18 & 8 & 64 & 72 & 71 & 71 & 80\\
19 & 8 & 128 & 135 & 131 & 131& 142\\
20 & 8 & 256 & 256 & & 262 & 274\\
25 & 8 & 4096 & 5421 & & 5465 & 5477\\
26 & 8 & 4104 & 9275 & & 9649 & 9697\\
\hline
26 & 10 & 384 & 836 & & 885 & 886 \\
\hline
25 & 12 & 52 & 55 &  & 57 & 58 \\
26 & 12 & 64 & 96 &  & 97 & 98\\
\hline
\end{tabular}
\end{table}

Table \ref{table_example} shows improved upper bounds on $A(n,d)$ when linear constraints in Theorems \ref{kq1}, \ref{del}, and \ref{kq2} are added to Schrijver's semidefinite programming bound (\ref{max}). In the table, by Schrijver bound we mean upper bound obtained from Schrijver's semidefinite programming bound (\ref{max}). Among improved upper bounds on $A(n,d)$, there are two new upper bounds, namely
$$A(18,8) \leq 71 \quad \mbox{ and } \quad A(19,8) \leq 131.$$ The other best known upper bounds are from \cite{gms}. As in \cite{s}, all computations here were done by the algorithm SDPT3 available online on the NEOS Server for Optimization (http://www.neos-server.org/neos/solvers/index.html).

\begin{remark}
Since $A(n,d) = A(n+1,d+1)$ if $d$ is odd, we can always assume that $d$ is even. If $d$ is even, then $A(n,d)$ is attained by a code with all codewords having even weights. Hence, in Schrijver's semidefinite programming bound, one can put $x^t_{i,j} = 0$ if $i$ or $j$ is odd.
\end{remark}

\begin{remark}
In Theorems \ref{kq1} and \ref{kq2}, the values of $A(n,d,w)$ and $T(w_1,n_1,w_2,n_2,d)$ may have not yet been known. However, we can replace them by any of their upper bounds (see the proof of \cite[Theorem 10]{mel} for the validity of this replacement in Theorem \ref{kq2}). While best known upper bounds on $A(n,d,w)$ (which are mostly from \cite{bsss,avz,s,o}) are used in our computations, all upper bounds on $T(w_1, n_1, w_2, n_2,d)$ that we used are from the tables on Erik Agrell's website http://webfiles.portal.chalmers.se/s2/research/kit/bounds/dcw.html.
\end{remark}

\section{Upper bounds on $A(n,d,w)$}\label{sba}

\subsection{Some Properties of $A(n,d,w)$}
We begin with some elementary properties of $A(n,d,w)$ which can be found in \cite{ms}.

\begin{theorem}
\begin{eqnarray}\label{a}
A(n,d,w) = A(n,d+1,w), \quad \mbox{ if } d \mbox{ is odd,}
\end{eqnarray}
\begin{eqnarray}\label{b}
A(n,d,w) = A(n,d,n-w),
\end{eqnarray}
\begin{eqnarray}\label{c}
A(n,2,w) = \left(n \atop w \right),
\end{eqnarray}
\begin{eqnarray}\label{d}
A(n,2w,w) = \left\lfloor \frac{n}{w} \right\rfloor,
\end{eqnarray}
\begin{eqnarray}\label{e}
A(n,d,w) = 1, \quad \mbox{ if } 2w < d.
\end{eqnarray}
\end{theorem}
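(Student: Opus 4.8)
The plan is to isolate a single distance identity for constant-weight vectors and to read off all five parts from it. Viewing codewords as subsets of $\{1, \ldots, n\}$, any two weight-$w$ codewords $X$ and $Y$ satisfy $d(X,Y) = |X \Delta Y| = 2w - 2|X \cap Y|$. I would establish this one-line identity first and record its two immediate consequences: every pairwise distance in a constant-weight code is even, and it lies in the range $0 \leq d(X,Y) \leq 2w$. With this, three parts follow at once. For (\ref{a}), evenness of all distances makes ``minimum distance $\geq d$'' with $d$ odd equivalent to ``minimum distance $\geq d+1$'', so the two extremal problems coincide. For (\ref{c}), two distinct weight-$w$ vectors are automatically at distance $\geq 2$, so the constraint $d \geq 2$ is vacuous and one may take all $\left( n \atop w \right)$ weight-$w$ vectors. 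For (\ref{e}), when $2w < d$ the bound $d(X,Y) \leq 2w$ excludes any pair of distinct codewords at distance $\geq d$, forcing a single (always available) codeword.

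For the remaining two parts I would use a symmetry and a packing argument. For (\ref{b}), I would introduce the complementation map $X \mapsto \{1, \ldots, n\} \setminus X$, verify that it is an isometry of $\mathcal F^n$ sending weight $w$ to weight $n-w$, and conclude that it gives a size- and minimum-distance-preserving bijection between $(n,d,w)$ codes and $(n,d,n-w)$ codes. For (\ref{d}), the identity shows that $d(X,Y) \geq 2w$ is equivalent to $|X \cap Y| = 0$, so an $(n,2w,w)$ code is exactly a family of pairwise disjoint $w$-subsets of an $n$-set; the upper bound $\lfloor n/w \rfloor$ then comes from counting the total number of coordinates available, and the matching lower bound from explicitly packing $\lfloor n/w \rfloor$ disjoint blocks.

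I do not expect a genuine obstacle here: each part reduces to the distance identity together with a standard bijection or counting step. The only points needing a line of care will be checking that complementation restricts correctly between the two weight classes in (\ref{b}) and exhibiting the disjoint packing that attains $\lfloor n/w \rfloor$ in (\ref{d}); both are routine.
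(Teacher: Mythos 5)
Your proposal is correct. The paper itself gives no proof of this theorem---it simply cites MacWilliams and Sloane \cite{ms} for these elementary facts---and your argument (the identity $d(X,Y)=2w-2|X\cap Y|$ for weight-$w$ words, plus complementation for (\ref{b}) and the disjoint-packing characterization for (\ref{d})) is exactly the standard proof found in that reference, so there is nothing to fault and no genuine difference in approach to report.
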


\begin{remark}
By (\ref{a}) and (\ref{c}), we can always assume that $d$ is even and $d \geq 4$. Also, by (\ref{b}), (\ref{d}), and (\ref{e}), we can assume that $d < 2w \leq n$.
\end{remark}

\subsection{Schrijver's Semidefinite Programming Bound on $A(n,d,w)$}

Let $\mathcal C$ be an $(n,d,w)$ constant-weight code and let $v = n-w$. For each $t$, $s$, $i$, and $j$, define
\begin{eqnarray}\label{defofy}
y^{t,s}_{i,j} = \frac{1}{|\mathcal C| \left( w \atop i-t, j-t, t \right) \left(v \atop i-s, j-s, s \right)} \mu^{t,s}_{i,j},
\end{eqnarray}
where $\mu^{t,s}_{i,j}$ is the number of triples $(X,Y,Z) \in \mathcal C^3$ with $|X \setminus Y| = i, |X \setminus Z| = j, |(X \setminus Y) \cap (X \setminus Z)| = t$, and $|(Y \setminus X) \cap (Z \setminus X)| = s$, or equivalently, with $|X \Delta Y| = 2i$, $|X \Delta Z| = 2j$, $|Y \Delta Z| = 2(i+j-t-s)$, and $|X \Delta Y \Delta Z| = w + 2t - 2s$. Set $y^{t,s}_{i,j} = 0$ if either $\left( w \atop i-t, j-t, t \right) = 0$ or $\left(v \atop i-s, j-s, s \right) = 0$.

In the previous section, $\beta^t_{i,j,k}$ depends on $n$. Hence, $\beta^t_{i,j,k}$ should be denoted by $\beta^{t,n}_{i,j,k}$. We will use the later notation in this section. As in \cite{s}, for each $k = 0, 1, \ldots, \lfloor \frac{w}{2} \rfloor$ and each $l = 0, 1, \ldots, \lfloor \frac{v}{2} \rfloor$, the matrices
\begin{eqnarray}\label{rr}
\left( \sum_{t,s} \beta^{t,w}_{i,j,k} \beta^{s,v}_{i,j,l} y^{t,s}_{i,j} \right)_{i,j \in W_k \cap V_l}
\end{eqnarray}
and
\begin{eqnarray}\label{rrprime}
\left( \sum_{t,s} \beta^{t,w}_{i,j,k} \beta^{s,v}_{i,j,l} (y^{0,0}_{i+j-t-s,0} - y^{t,s}_{i,j}) \right)_{i,j \in W_k \cap V_l}
\end{eqnarray}
are positive semidefinite, where $W_k = \{k, k+1, \ldots, w-k\}$ and $V_l = \{l, l+1, \ldots, v-l\}$.
Since
\begin{eqnarray}
|\mathcal C| = \sum_{i=0}^{\min\{w,v\}} \left(w \atop i \right) \left(v \atop i \right) y^{0,0}_{i,0},
\end{eqnarray}
an upper bound on $A(n,d,w)$ can be obtained by considering the $y^{t,s}_{i,j}$ as variables and by
\begin{eqnarray}\label{maxcwc}
\mbox{maximizing } \sum_{i=0}^{\min\{w,v\}} \left(w \atop i \right) \left(v \atop i \right) y^{0,0}_{i,0}
\end{eqnarray}
subject to the matrices (\ref{rr}) and (\ref{rrprime}) are positive semidefinite for each $k = 0, 1, \ldots, \lfloor \frac{w}{2} \rfloor$ and each $l = 0, 1, \ldots, \lfloor \frac{v}{2} \rfloor$, and subject to the following conditions.
\begin{itemize}
\item [(i)] $y^{0,0}_{0,0} = 1$.
\item [(ii)] $0 \leq y^{t,s}_{i,j} \leq y^{0,0}_{i,0}$ and $y^{0,0}_{i,0} + y^{0,0}_{j,0} \leq 1 + y^{t,s}_{i,j}$ for all $i, j, t, s \in \{0, 1, \ldots, \min\{w, v \}\}$.
\item [(iii)] $y^{t,s}_{i,j} = y^{t',s'}_{i',j'}$ if $t'-s' = t-s$ and $(i',j',i'+j'-t'-s')$ is a permutation of $(i,j,i+j-t-s)$.
\item [(iv)] $y^{t,s}_{i,j} = 0$ if $\{2i,2j, 2(i+j-t-s)\} \cap \{1, 2, \ldots, d-1\} \not = \emptyset$.
\end{itemize}

\subsection{Improved Schrijver's Semidefinite Programming Bound on $A(n,d,w)$}

\subsubsection{New Constraints for $y^{t,s}_{i,j}$}

Let $\mathcal C$ be an $(n,d,w)$ constant-weight code and let $y^{t,s}_{i,j}$ be defined by (\ref{defofy}). The following theorem corresponds to Theorem \ref{kq1} in the previous section.
\begin{theorem}\label{tr1}
For all $i,j,s,t \in \{0, 1, \ldots, \min\{w, v\}\}$ with $\left(w \atop i-t, j-t,t \right) \not = 0$ and $\left(v \atop i-s,j-s,s\right) \not =0$,
\begin{eqnarray}
y^{t,s}_{i,j} \leq \frac{T(t,i,j-t,w-i,s,i,j-s,v-i,d)}{\left(i \atop t \right) \left(w-i \atop j-t \right) \left(i \atop s \right) \left(v-i \atop j-s\right)} y^{0,0}_{i,0}.
\end{eqnarray}
\end{theorem}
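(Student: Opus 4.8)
The plan is to imitate the proof of Theorem~\ref{kq1}, but with a four-point set $\Lambda$ so that Proposition~\ref{mainp2} supplies the bound in place of Proposition~\ref{mainp}. First I would fix a pair $(X,Y)\in\mathcal C^2$ with $|X\setminus Y|=i$ and bound the number of codewords $Z\in\mathcal C$ that, together with $X$ and $Y$, contribute to $\mu^{t,s}_{i,j}$, namely those satisfying $|X\setminus Z|=j$, $|(X\setminus Y)\cap(X\setminus Z)|=t$, and $|(Y\setminus X)\cap(Z\setminus X)|=s$. Each such $Z$ has a prescribed distance to each of the four reference vectors $X_1$ equal to the zero vector, $X_2 = X$, $X_3 = Y$, and $X_4 = X+Y$; reading off the equivalent formulation attached to the definition of $\mu^{t,s}_{i,j}$, these distances are $d_1 = w$, $d_2 = |X\Delta Z| = 2j$, $d_3 = |Y\Delta Z| = 2(i+j-t-s)$, and $d_4 = |(X+Y)\Delta Z| = w+2t-2s$. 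Hence the number of admissible $Z$ is at most $A(n,\Lambda,d)$ for $\Lambda = \{(X_i,d_i)\}_{i=1}^4$.

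This $\Lambda$ meets the hypotheses of Proposition~\ref{mainp2}: the first vector is zero, $X_2$ and $X_3$ share weight $w=d_1$, and $X_4 = X_2 + X_3$. Substituting $d_1,\dots,d_4$ into the formulas there gives $n_1 = n_3 = i$, $n_2 = w-i$, $n_4 = v-i$, together with $w_1 = i-t$, $w_2 = w-i-j+t$, $w_3 = s$, $w_4 = j-s$, so that $A(n,\Lambda,d) = T(i-t,i,w-i-j+t,w-i,s,i,j-s,v-i,d)$. Complementing a whole coordinate block does not change the size of a constant-weight code on the remaining blocks, so replacing $w_1 = i-t$ by $t$ on the first block and $w_2 = w-i-j+t$ by $j-t$ on the second block rewrites this as $T(t,i,j-t,w-i,s,i,j-s,v-i,d)$, the quantity in the statement.

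It remains to assemble the counts. Summing the per-pair bound over all pairs $(X,Y)$ with $|X\setminus Y|=i$ --- there are exactly $\mu^{0,0}_{i,0}$ of them, since those correspond to the triples $(X,Y,X)$ --- yields $\mu^{t,s}_{i,j}\leq T(t,i,j-t,w-i,s,i,j-s,v-i,d)\,\mu^{0,0}_{i,0}$. Dividing by $|\mathcal C|\binom{w}{i-t,j-t,t}\binom{v}{i-s,j-s,s}$, writing $\mu^{0,0}_{i,0} = |\mathcal C|\binom{w}{i}\binom{v}{i}\,y^{0,0}_{i,0}$, and using the multinomial identities $\binom{w}{i}/\binom{w}{i-t,j-t,t} = 1/\bigl(\binom{i}{t}\binom{w-i}{j-t}\bigr)$ and $\binom{v}{i}/\binom{v}{i-s,j-s,s} = 1/\bigl(\binom{i}{s}\binom{v-i}{j-s}\bigr)$ produces exactly the asserted inequality; the hypotheses $\binom{w}{i-t,j-t,t}\neq 0$ and $\binom{v}{i-s,j-s,s}\neq 0$ guarantee that every division is legitimate.

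I expect the main obstacle to be purely a matter of bookkeeping rather than ideas: correctly converting the four set-intersection conditions defining $\mu^{t,s}_{i,j}$ into the distances $d_2,d_3,d_4$ (in particular verifying $d_3 = 2(i+j-t-s)$ and $d_4 = w+2t-2s$ from the $|Y\Delta Z|$ and $|X\Delta Y\Delta Z|$ identities already recorded with the definition), and then confirming that the raw output of Proposition~\ref{mainp2} coincides with the stated $T$ only after the two block-complementations. No step is conceptually hard, but the two weight values $w_1$ and $w_2$ emerge complemented, and that is the one place where a sign or indexing slip could silently creep in.
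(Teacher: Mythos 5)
Your proposal is correct and takes essentially the same route as the paper's proof: fix a pair $(X,Y)$ with $|X\Delta Y|=2i$, bound the admissible $Z$ by $A(n,\Lambda,d)$ with the four-point set $\Lambda = \{(0,w),(X,2j),(Y,2(i+j-t-s)),(X+Y,w+2t-2s)\}$, apply Proposition~\ref{mainp2} to get $T(i-t,i,(w-i)-(j-t),w-i,s,i,j-s,v-i,d)$, and then sum over pairs and normalize via the multinomial identities. The block-complementation you invoke to replace $w_1=i-t$ by $t$ and $w_2=(w-i)-(j-t)$ by $j-t$ is exactly the paper's citation of Proposition~\ref{append}~(iii), so the two arguments coincide in every step.
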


\begin{proof}
Suppose that $(X,Y) \in \mathcal C^2$ such that $|X \Delta Y| = 2i$. We claim that the number of codewords $Z \in \mathcal C$ such that $|X \Delta Z| = 2j$, $|Y \Delta Z| = 2(i+j-t-s)$, and $|X \Delta Y \Delta Z| = w + 2t - 2s$ is upper bounded by $T(t,i,j-t,w-i, s,i,j-s,v-i, d)$. It is easy to see that this number is upper bounded by $A(n,\Lambda,d)$, where $\Lambda = \{(0,w),(X,2j), (Y,2(i+j-t-s)),(X \Delta Y, w+2t-2s)\}$. By Proposition \ref{mainp2},
\begin{eqnarray}
A(n,\Lambda,d) = T(w_1, n_1, w_2, n_2, w_3, n_3, w_4, n_4, d),
\end{eqnarray}
where $n_1 = n_3 = \frac{1}{2} |X \Delta Y| = i$, $n_2 = d_1 - n_1 = w - i$, $n_4 = n - i - (w-i) - i = v-i$, and similarly, $w_1 = i-t, w_2 = (w-i)-(j-t), w_3 = s, w_4 = j-s$. Hence,
\begin{eqnarray}
\nonumber A(n,\Lambda,d) & = & T(i-t,i,(w-i)-(j-t),w-i,s, i, j-s, v-i, d)\\
& = & T(t, i, j-t, w-i, s, i, j-s, v-i, d),
\end{eqnarray}
where the later equality comes from Proposition \ref{append} (iii) in the appendix. Since the number of pairs $(X,Y) \in \mathcal C^2$ such that $|X \Delta Y|=2i$ is $\mu^{0,0}_{i,0}$,
\begin{eqnarray}
\mu^{t,s}_{i,j} \leq T(t, i, j-t, w-i, s, i, j-s, v-i, d) \mu^{0,0}_{i,0}.
\end{eqnarray}
Therefore,
\begin{eqnarray}
\nonumber y^{t,s}_{i,j} & = & \frac{1}{|\mathcal C| \left(w \atop i-t, j-t, t \right) \left( v \atop i-s, j-s, s\right)} \mu^{t,s}_{i,j}\\
\nonumber & \leq & \frac{T(t, i, j-t, w-i, s, i, j-s, v-i, d)}{|\mathcal C| \left(w \atop i-t, j-t, t \right) \left( v \atop i-s, j-s, s\right)} \mu^{0,0}_{i,0}\\
\nonumber & = & \frac{T(t, i, j-t, w-i, s, i, j-s, v-i, d)}{\left(w \atop i-t, j-t, t \right) \left( v \atop i-s, j-s, s\right) \left(w \atop i \right)^{-1} \left( v \atop i \right)^{-1}}  y^{0,0}_{i,0}\\
\nonumber & = & \frac{T(t, i, j-t, w-i, s, i, j-s, v-i, d)}{\left(i \atop t\right) \left(w-i \atop j-t\right) \left(i \atop s \right) \left(v-i \atop j-s \right)}  y^{0,0}_{i,0}.
\end{eqnarray}
\end{proof}

\subsubsection{Delsarte's Linear Programming Bound}

Let $\mathcal C$ be an $(n,d,w)$ constant-weight code with distance distribution $\{B_i\}_{i=0}^n$. By definition of $y^{t,s}_{i,j}$,
\begin{eqnarray}
\left(w \atop i \right) \left(v \atop i \right) y^{0,0}_{i,0} = B_{2i}
\end{eqnarray}
for every $i$ (note that $B_0=1$ and $B_i = 0$ whenever $i$ is odd or $0 < i < d$ or $i > 2w$).

\begin{theorem}\label{tr2} (Delsarte's linear programming bound).
If $\{B_{i}\}_{i=0}^n$ is the distance distribution of an $(n,d,w)$ constant-weight code, then for $k = 1, 2, \ldots, w$,
\begin{eqnarray}
\sum_{i=d/2}^w q(k,i,n,w)B_{2i} \geq -1,
\end{eqnarray}
where
\begin{eqnarray}
q(k,i,n,w) = \frac{\sum_{j=0}^i (-1)^j \left( k \atop j \right) \left(w-k \atop i-j \right) \left( n-w-k \atop i - j \right)}{\left(w \atop i \right) \left(n-w \atop i \right)}.
\end{eqnarray}
\end{theorem}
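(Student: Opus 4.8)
The plan is to recognize this statement as the specialization of Delsarte's linear programming bound to the Johnson scheme $J(n,w)$, whose points are the $w$-subsets of $\{1,\dots,n\}$ and whose $i$-th relation $R_i$ pairs $X,Y$ exactly when $|X\Delta Y|=2i$ (equivalently $|X\cap Y|=w-i$), for $i=0,1,\dots,w$. First I would record the structural data of this scheme: the valency of $R_i$ is $v_i=\binom{w}{i}\binom{n-w}{i}$, the Bose--Mesner algebra is spanned by the adjacency matrices $A_0=I,A_1,\dots,A_w$, and it possesses primitive idempotents $E_0,\dots,E_w$, namely the orthogonal projections onto the common eigenspaces, satisfying $A_i=\sum_k p_i(k)E_k$ and $E_k=\frac{1}{N}\sum_i q_k(i)A_i$, where $N=\binom{n}{w}$, where $p_i(k)$ and $q_k(i)$ are the first and second eigenvalues, and $m_k=\operatorname{tr}E_k$ the multiplicities.

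The computational heart, and the step I expect to cost the most, is to identify the first eigenvalues $p_i(k)$ with the numerator appearing in $q(k,i,n,w)$; that is, to show $p_i(k)=\sum_{j=0}^i(-1)^j\binom{k}{j}\binom{w-k}{i-j}\binom{n-w-k}{i-j}$, the Eberlein polynomial. This is the classical eigenvalue computation for the Johnson scheme, which can be carried out either through the $S_n$-module decomposition of $\mathbb{R}^{\binom{[n]}{w}}$ into the irreducibles indexed by the partitions $(n-k,k)$, $0\le k\le w$, or via the standard generating-function and recurrence argument; I would cite Delsarte's work here rather than reproduce it. Setting the indices to their extreme values gives $p_0(k)=1$ for all $k$ and $p_i(0)=v_i$, so that $q(k,i,n,w)=p_i(k)/v_i$ and in particular $q(k,0,n,w)=1$.

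Next I would supply the bridge between the two eigenvalue families. Computing $\operatorname{tr}(A_iE_k)$ in two ways --- once from $A_iE_k=p_i(k)E_k$, which gives $m_k\,p_i(k)$, and once from $E_k=\frac1N\sum_j q_k(j)A_j$ together with $\operatorname{tr}(A_iA_j)=N v_i\delta_{ij}$ (valid since every relation of the Johnson scheme is symmetric), which gives $v_i\,q_k(i)$ --- yields the identity $m_k\,p_i(k)=v_i\,q_k(i)$. Hence $q(k,i,n,w)=p_i(k)/v_i=q_k(i)/m_k$, which is exactly the normalization the theorem requires.

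Finally I would invoke positive semidefiniteness. Let $\chi$ be the characteristic column vector of $\mathcal C$ in $\mathbb{R}^{\binom{[n]}{w}}$. For each $k$ the projection $E_k$ is positive semidefinite, so $\chi^{\mathsf T}E_k\chi\ge 0$. Since $\chi^{\mathsf T}A_i\chi=|\{(X,Y)\in\mathcal C^2:|X\Delta Y|=2i\}|=|\mathcal C|\,B_{2i}$, expanding $E_k$ gives $\chi^{\mathsf T}E_k\chi=\frac{|\mathcal C|}{N}\sum_{i=0}^w q_k(i)B_{2i}=\frac{|\mathcal C|\,m_k}{N}\sum_{i=0}^w q(k,i,n,w)B_{2i}\ge 0$. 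As $|\mathcal C|$, $m_k$, and $N$ are strictly positive, I conclude $\sum_{i=0}^w q(k,i,n,w)B_{2i}\ge 0$ for $k=1,\dots,w$. The proof then closes by isolating the $i=0$ term: since $q(k,0,n,w)=1$ and $B_0=1$, while $B_{2i}=0$ for $0<2i<d$ so that the surviving indices are $i\ge d/2$, the inequality reads $1+\sum_{i=d/2}^w q(k,i,n,w)B_{2i}\ge 0$, that is, $\sum_{i=d/2}^w q(k,i,n,w)B_{2i}\ge -1$, as claimed.
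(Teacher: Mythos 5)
Your proof is correct. The paper itself states this theorem without proof, quoting it as Delsarte's classical linear programming bound (following \cite{d}), and your argument --- the Johnson scheme $J(n,w)$, the Eberlein eigenvalues $p_i(k)$, the identity $m_k\,p_i(k)=v_i\,q_k(i)$, and positive semidefiniteness of the primitive idempotents applied to the characteristic vector of $\mathcal C$ --- is precisely the standard association-scheme derivation that the cited result rests on, so there is nothing to reconcile between the two.
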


Specifying Delsarte's linear programming bound on $A(n,d)$ gives the following linear constraints on $B_i$, which sometimes help reducing upper bounds on $A(n,d,w)$ by $1$ (see \cite[Proposition 11]{kkt}).

\begin{theorem}\label{tr3}
Let $\mathcal C$ be an $(n,d,w)$ constant-weight code with distance distribution $\{B_i\}_{i=0}^n$. For each $k = 1, 2, \ldots, n$,
\begin{eqnarray}\label{congthuc2k2}
\sum_{i=d/2}^w P_k^-(n;2i)B_{2i} \leq  \frac{2}{M} \Bigl[ \left( \left( n \atop k \right) - r_k\right)q_k(M-q_k) + r_k (q_k +1)(M-q_k - 1)\Bigr],
\end{eqnarray}
where $q_k$ and $r_k$ are the quotient and the remainder, respectively, when dividing $M P_k^-(n;w)$ by $\left(n \atop k \right)$, i.e.
\begin{eqnarray}
M P_k^-(n;w) = q_k \left(n \atop k \right) + r_k
\end{eqnarray}
with $0 \leq r_k < \left(n \atop k \right)$, and where $P^-_k(n;x)$ is defined by
\begin{eqnarray}
P_k^-(n;x) = \sum_{\substack{j=0\\j \mbox{\footnotesize{ odd }}}}^n \left(x \atop j \right) \left(n-x \atop k - j \right).
\end{eqnarray}

\end{theorem}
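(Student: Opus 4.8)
The plan is to interpret both sides of (\ref{congthuc2k2}) as sums over the $\left(n \atop k\right)$ subsets $S \subseteq \{1,\dots,n\}$ of size $k$, and then to reduce the inequality to a finite integer optimization that is solved by balancing. For each such $S$, I would let $L_S$ denote the number of codewords $X \in \mathcal C$ for which $|X \cap S|$ is odd; this single quantity controls everything. The first step is a double count giving $\sum_{|S|=k} L_S = M\,P_k^-(n;w)$: for a fixed codeword $X$, which has weight $w$, the number of $k$-subsets $S$ with $|X \cap S|$ odd is exactly $\sum_{j\text{ odd}} \left(w \atop j\right)\left(n-w \atop k-j\right) = P_k^-(n;w)$, and summing over the $M$ codewords yields the identity. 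Note $P_k^-(n;w) \le \left(n \atop k\right)$, since it counts a subfamily of all $k$-subsets.

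The second step rewrites the left-hand side of (\ref{congthuc2k2}) through the $L_S$. Using $M B_{2i} = |\{(X,Y)\in\mathcal C^2 : |X\Delta Y| = 2i\}|$ together with the identity $P_k^-(n;a) = \#\{S : |S|=k,\ |A\cap S|\text{ odd}\}$ valid for any $A$ with $|A|=a$, I would obtain
\begin{eqnarray}
\nonumber M \sum_{i=d/2}^w P_k^-(n;2i)B_{2i} = \sum_{(X,Y)\in\mathcal C^2} P_k^-(n;|X\Delta Y|) = \sum_{|S|=k} \#\{(X,Y): |(X\Delta Y)\cap S|\text{ odd}\}.
\end{eqnarray}
Since $|(X\Delta Y)\cap S|$ is odd precisely when exactly one of $|X\cap S|,|Y\cap S|$ is odd, the inner count equals $2L_S(M-L_S)$. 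Therefore $\sum_{i} P_k^-(n;2i)B_{2i} = \tfrac{2}{M}\sum_{|S|=k} L_S(M-L_S)$, converting the target inequality into an upper bound on $\sum_{|S|=k} L_S(M-L_S)$.

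The final step is the optimization. Each $L_S$ is an integer with $0 \le L_S \le M$; the $L_S$ have the fixed sum $M P_k^-(n;w) = q_k\left(n \atop k\right) + r_k$; and the map $L \mapsto L(M-L)$ is concave. Maximizing a sum of a concave function under a fixed coordinate sum forces the values to be as equal as possible, by a majorization (Karamata) argument: the balanced vector with $r_k$ entries equal to $q_k+1$ and the remaining $\left(n \atop k\right)-r_k$ entries equal to $q_k$ is majorized by every other feasible integer vector and hence maximizes the sum. Substituting this maximizer gives $\sum_{|S|=k} L_S(M-L_S) \le (\left(n \atop k\right)-r_k)\,q_k(M-q_k) + r_k(q_k+1)(M-q_k-1)$, and multiplying by $2/M$ produces exactly (\ref{congthuc2k2}).

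The main obstacle I anticipate is making this last step rigorous at the integer level rather than appealing to a continuous relaxation. I must verify that the balanced vector is feasible — in particular that $q_k+1 \le M$, which follows from $P_k^-(n;w)\le \left(n \atop k\right)$ forcing the average $M P_k^-(n;w)/\left(n \atop k\right)$ to be at most $M$ — and that it is genuinely majorized by every competing feasible integer configuration, so that concavity delivers the asserted upper bound and not the reverse inequality. Once these two points are settled, the remainder is the bookkeeping of the two double counts described above.
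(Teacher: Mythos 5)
Your proof is correct: the double count $\sum_{|S|=k} L_S = M P_k^-(n;w)$, the identity $\sum_{i} P_k^-(n;2i)B_{2i} = \frac{2}{M}\sum_{|S|=k} L_S(M-L_S)$ (using that $|(X\Delta Y)\cap S|$ is odd exactly when one of $|X\cap S|$, $|Y\cap S|$ is odd, and that $P_k^-(n;0)=0$ takes care of the diagonal pairs), and the integer balancing step are all sound, and the feasibility concern you raise is harmless since the balanced value bounds even the relaxed problem without the constraint $L_S \leq M$. Note that the paper itself gives no proof of this theorem but cites \cite[Proposition 11]{kkt}; your counting-and-balancing argument is essentially the standard derivation behind that cited result (equivalently, minimizing $\sum_{|S|=k}(M-2L_S)^2$ subject to the fixed sum forced by constant weight $w$), so you have in effect reconstructed the intended proof.
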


\subsubsection{New Linear Constraints on Distance Distributions $\{B_{i}\}_{i=0}^n$}
Linear constraints which correspond to those in Theorem \ref{kq2} have not been studied for constant-weight codes even though similar constraints have been studied by Argrell, Vardy, and Zeger in \cite{avz} (see Theorem \ref{tr5} below). We now present these constraints. Several new notations are needed. For convenience, we fix the following settings until the end of this section.
\begin{itemize}
\item $\mathcal C$ is an $(n,d,w)$ constant-weight code with distance distribution $\{B_i\}_{i=0}^n$ such that $d$ is even and $d < 2w \leq n$.
\item Let $v = n-w$. Since $2w \leq n$, $w \leq v$.
\item Let $H = \{d/2, d/2+1, \ldots, w\}$, which is the set of all positive integer $i$ such that $B_{2i}$ can be nonzero.
\item For each $i \in H$, let $\mathcal V_i$ be the set of all vectors $X$ in $\mathcal F^n$ such that $X$ has exactly $i$ ones on the first $w$ coordinates and exactly $i$ ones on the last $v = n-w$ coordinates.
\item For $i \not = j$ both in $H$, define
\begin{eqnarray}
m_{i,j} = \max\{d(X,Y) \mid X \in \mathcal V_i, Y \in \mathcal V_j\}.
\end{eqnarray}
\item For each codeword $X$ in $\mathcal C$, let
\begin{eqnarray}
S_{2i}(X) = \{Y \in \mathcal C \mid d(X,Y) = 2i\},
\end{eqnarray}
which is the set of all codewords $Y$ in $\mathcal C$ at distance $2i$ from $X$. By definition of $\{B_i\}_{i=0}^n$,
\begin{eqnarray}
B_{2i} = \frac{1}{|\mathcal C|} \sum_{X \in \mathcal C}|S_{2i}(X)|
\end{eqnarray}
for each $i \in H$.
\item For each $i \in H$, let $Q_i$ denote an integer such that
\begin{eqnarray}\label{qi}
T(i,w,i,v,d) \leq Q_i.
\end{eqnarray}
\item For $i \not = j$ both in $H$ with $i+j \geq v$ and $m_{i,j}=d$, let $Q_{ji}$ denote an integer such that
\begin{eqnarray}\label{a2}
T(w-j, i, v-j, i, d) \leq Q_{ji},
\end{eqnarray}
\end{itemize}

\begin{proposition}\label{tinhmij}
For $i \not = j$ both in $H$,
\begin{eqnarray}
m_{i,j} = a + b,
\end{eqnarray}
where
\begin{center}
$a = \left\{
\begin{array}{ll}
i + j & \mbox{ if } i + j < w\\
i + j - 2(i+j-w) & \mbox{ if } i + j \geq w
\end{array}
\right.$
\end{center}
and
\begin{center}
$b = \left\{
\begin{array}{ll}
i + j & \mbox{ if } i + j < v\\
i + j - 2(i+j-v) & \mbox{ if } i + j \geq v
\end{array}
\right..$
\end{center}
In particular, if $i+j \geq v \geq w$, then
\begin{eqnarray}
m_{i,j} = 2(n - i - j).
\end{eqnarray}
\end{proposition}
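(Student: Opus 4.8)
The plan is to exploit the product structure of the two coordinate blocks. View each coordinate as belonging either to the first $w$ coordinates or to the last $v = n-w$ coordinates. Since the Hamming distance $d(X,Y) = |X \Delta Y|$ counts disagreements position by position, it splits as the number of disagreements on the first block plus the number on the second block. Because the restriction of any $X \in \mathcal{V}_i$ to the first block is an arbitrary weight-$i$ vector of length $w$, and its restriction to the second block is an arbitrary weight-$i$ vector of length $v$ (and likewise for $Y \in \mathcal{V}_j$ with weight $j$), the two blocks can be chosen independently. Hence $m_{i,j}$ equals the maximal achievable disagreement on the first block, call it $a$, plus the maximal achievable disagreement on the second block, call it $b$, and it remains to compute $a$ and $b$.

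The key computation is a one-block lemma: on a block of length $L$, if one vector has $i$ ones and the other has $j$ ones, then the number of disagreements equals $i + j - 2c$, where $c$ is the number of coordinates carrying a one in both vectors. Indeed, classifying each position by the pair of bits it carries, the positions with exactly one $1$ number $(i-c)+(j-c)$, which are precisely the disagreements. Maximizing the distance therefore amounts to minimizing the overlap $c$, and the least possible overlap is $\max(0,\, i+j-L)$: when $i+j \le L$ the two supports can be made disjoint ($c=0$), whereas when $i+j > L$ a counting argument forces $c \ge i+j-L$, a value that is attainable by an explicit placement of the ones. Thus the maximal one-block distance is $i+j$ when $i+j < L$, and $i+j - 2(i+j-L) = 2L - (i+j)$ when $i+j \ge L$, the two expressions agreeing at $i+j = L$.

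Substituting $L = w$ yields the stated formula for $a$ and $L = v$ the formula for $b$, and adding gives $m_{i,j} = a+b$. For the final assertion, the hypothesis $i+j \ge v \ge w$ places both blocks in the regime $i+j \ge L$, so $a = 2w - (i+j)$ and $b = 2v - (i+j)$; summing and using $w+v = n$ gives $m_{i,j} = 2(w+v) - 2(i+j) = 2(n-i-j)$. I expect no serious obstacle: the only points requiring care are justifying that the two blocks may be optimized simultaneously, which is immediate from the independence of the block restrictions, and verifying that the minimum-overlap bound $\max(0,\, i+j-L)$ is genuinely attained, which is a routine explicit construction.
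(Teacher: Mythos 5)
Your proof is correct: the block decomposition, the one-block overlap lemma, and the final substitution all check out, and the edge case $i+j=L$ is handled consistently. The paper itself omits the argument (its proof reads only ``The proof is straightforward''), and what you have written is exactly the natural argument being alluded to, so there is no divergence to report.
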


\begin{proof}
The proof is straightforward.
\end{proof}

\begin{lemma}\label{lmmoi15}
For each $i \in H$ and each codeword $X \in \mathcal C$,
\begin{eqnarray}\label{s2i}
|S_{2i}(X)| \leq Q_i.
\end{eqnarray}
\end{lemma}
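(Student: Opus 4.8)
The plan is to bound the size of the set $S_{2i}(X)$ directly by relating it to a doubly-constant-weight code count, and then invoke the already-established inequality $T(i,w,i,v,d) \leq Q_i$ from (\ref{qi}).

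First I would fix a codeword $X \in \mathcal C$ and consider the set $S_{2i}(X) = \{Y \in \mathcal C \mid d(X,Y) = 2i\}$. By translation (relabelling coordinates so that $X$ becomes a convenient reference point), I would like to reduce to a counting problem about vectors of prescribed weight distribution. Since $\mathcal C$ is an $(n,d,w)$ constant-weight code, every $Y \in \mathcal C$ has weight $w$; in particular every $Y \in S_{2i}(X)$ has weight $w$ and satisfies $d(X,Y)=2i$. By rearranging the $n$ coordinates so that the support of $X$ occupies the first $w$ coordinates and its complement occupies the last $v=n-w$ coordinates, each such $Y$ must have exactly $i$ zeros (equivalently $w-i$ ones) among the first $w$ coordinates and exactly $i$ ones among the last $v$ coordinates, because $d(X,Y) = 2i$ forces $i$ deletions and $i$ insertions relative to the support of $X$. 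Thus every $Y \in S_{2i}(X)$ is a codeword lying in $\mathcal V_i$ (after this normalization), and $S_{2i}(X)$ is a binary code whose codewords all share the weight profile defining a doubly-constant-weight code.

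Next I would observe that $S_{2i}(X)$, viewed this way, is precisely a $(w-i, w, i, v, d)$ doubly-constant-weight code, or an equivalent normalization thereof: all its members have the prescribed number of ones on the first $w$ coordinates and on the last $v$ coordinates, and any two distinct members are at distance $\geq d$ since they are distinct codewords of $\mathcal C$. Hence $|S_{2i}(X)| \leq T(i,w,i,v,d)$ by the very definition of $T$ as the largest possible size of such a code. (I would double-check the exact placement of the parameters against the paper's convention $T(w_1,n_1,w_2,n_2,d)$ and the symmetry $A(n,d,w)=A(n,d,n-w)$, together with Proposition \ref{append} in the appendix if a reordering of the weight arguments is needed, so that the bound reads $T(i,w,i,v,d)$ as stated.) Combining with the hypothesis $T(i,w,i,v,d) \leq Q_i$ from (\ref{qi}) yields $|S_{2i}(X)| \leq Q_i$, which is exactly (\ref{s2i}).

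The main obstacle is bookkeeping rather than depth: I must get the correspondence between ``distance $2i$ from $X$'' and the doubly-constant-weight parameters exactly right, so that the resulting code matches the signature $T(i,w,i,v,d)$ and not some permuted variant. The substantive content is already packaged in the definition of $T$ and in (\ref{qi}); the only real work is verifying that the set of codewords at distance $2i$ from a fixed weight-$w$ codeword, after the coordinate normalization, has the claimed split $(i \text{ among the first } w, \; i \text{ among the last } v)$. This is the same type of argument used in the proof of Theorem \ref{tr1} via Proposition \ref{mainp2}, so I expect the verification to be routine once the coordinates are arranged as in that proof.
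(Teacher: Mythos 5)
Your proposal is correct and follows essentially the same route as the paper: the paper bounds $|S_{2i}(X)|$ by $A(n,\Lambda,d)$ with $\Lambda = \{(0,w),(X,2i)\}$ and then invokes Proposition \ref{mainp} to identify this with $T(w-i,w,i,v,d)$, whereas you simply inline that proposition's coordinate-rearrangement computation to see directly that $S_{2i}(X)$ is a $(w-i,w,i,v,d)$ doubly-constant-weight code. Both arguments then finish identically, via Proposition \ref{append}~(iii) to rewrite this as $T(i,w,i,v,d)$ and the defining inequality (\ref{qi}) to get $Q_i$.
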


\begin{proof}
Let $X$ be a codeword in $\mathcal C$. It is easy to see that $|S_{2i}(X)|$ is upper bounded by $A(n,\Lambda,d)$, where $\Lambda = \{(0,w),(X,2i)\}$. By Propositions \ref{mainp} and \ref{append} (iii), \begin{eqnarray}
A(n,\Lambda,d) \leq T(w-i, w, i, v,d) = T(i,w,i,v,d).
\end{eqnarray}
Hence, $|S_{2i}(X)| \leq T(i,w,i,v,d) \leq Q_i$.
\end{proof}

\begin{theorem}\label{dlmot1}
Suppose that $H_1$ is a nonempty subset of $H$ such that $m_{i,j} < d$ for all $i \not = j$ both in $H_1$. Then for each codeword $X \in \mathcal C$, $S_{2i}(X)$ is nonempty for at most one $i$ in $H_1$. Furthermore,
\begin{eqnarray}\label{claim1}
\sum_{i \in H_1} \frac{B_{2i}}{Q_i} \leq 1.
\end{eqnarray}
\end{theorem}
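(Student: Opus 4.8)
The plan is to prove the two assertions in turn, with the combinatorial ``at most one $i$'' claim carrying all the weight; the inequality (\ref{claim1}) then drops out of a one-line averaging argument together with Lemma \ref{lmmoi15}.

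First I would fix a codeword $X \in \mathcal{C}$ and, after permuting coordinates, arrange that the support of $X$ occupies exactly the first $w$ coordinates. The key observation I want to extract is that every codeword $Y$ counted in $S_{2i}(X)$ yields a difference pattern $X \Delta Y$ lying in $\mathcal{V}_i$. Indeed, since $Y$ has weight $w$ and $d(X,Y) = 2i$, the overlap satisfies $|X \cap Y| = w - i$, so $X \Delta Y$ has exactly $i$ ones on $\mathrm{supp}(X)$ and exactly $i$ ones on its complement — which is precisely the defining property of $\mathcal{V}_i$.

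Next I would argue by contradiction. Suppose that for this $X$ there exist distinct $i \neq j$ in $H_1$ with $S_{2i}(X)$ and $S_{2j}(X)$ both nonempty, and pick $Y \in S_{2i}(X)$ and $Z \in S_{2j}(X)$. By the previous step $X \Delta Y \in \mathcal{V}_i$ and $X \Delta Z \in \mathcal{V}_j$, and the identity $Y \Delta Z = (X \Delta Y) \Delta (X \Delta Z)$ gives $d(Y,Z) = d(X \Delta Y, X \Delta Z) \leq m_{i,j} < d$, where the last inequality is exactly the hypothesis defining $H_1$. But $Y$ and $Z$ are distinct codewords, since their distances $2i$ and $2j$ from $X$ differ, so the minimum distance of $\mathcal{C}$ forces $d(Y,Z) \geq d$, a contradiction. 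This shows $S_{2i}(X)$ is nonempty for at most one $i \in H_1$. For the inequality I would then start from $B_{2i} = \frac{1}{|\mathcal{C}|} \sum_{X \in \mathcal{C}} |S_{2i}(X)|$, interchange the two summations, and bound $\sum_{i \in H_1} \frac{|S_{2i}(X)|}{Q_i}$ for each fixed $X$: at most one term is nonzero by the first part, and for that term Lemma \ref{lmmoi15} gives $|S_{2i}(X)| \leq Q_i$, so the inner sum is at most $1$; averaging over the $|\mathcal{C}|$ codewords yields $\sum_{i \in H_1} \frac{B_{2i}}{Q_i} \leq 1$.

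I expect the main obstacle to lie entirely in the bookkeeping of the first step: correctly identifying $X \Delta Y$ as a member of $\mathcal{V}_i$, and checking that $m_{i,j}$ — defined with respect to the fixed split into the first $w$ and last $v$ coordinates — is legitimately an upper bound for $d(Y,Z)$ even though $\mathrm{supp}(X)$ need not be the first $w$ coordinates. This is justified by the permutation-invariance of $m_{i,j}$, since that quantity depends only on $i$, $j$, $w$, and $v$; once this translation-to-difference-patterns dictionary is set up, the remaining contradiction and the averaging step are routine.
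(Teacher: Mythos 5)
Your proposal is correct and follows essentially the same route as the paper: identify $X \Delta Y$ (equivalently $Y+X$) as an element of $\mathcal V_i$ after permuting coordinates so that $\mathrm{supp}(X)$ is the first $w$ positions, derive the contradiction $d(Y,Z) \leq m_{i,j} < d$ for distinct $i,j \in H_1$, and then combine the ``at most one nonempty $S_{2i}(X)$'' conclusion with Lemma \ref{lmmoi15} and averaging over $\mathcal C$ to get (\ref{claim1}). Your explicit remark on the permutation-invariance of $m_{i,j}$ makes precise a step the paper leaves implicit, but the argument is the same.
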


\begin{proof}
Let $X$ be a codeword in $\mathcal C$. Suppose on the contrary that there exist $i \not = j$ both in $H_1$ such that $S_{2i}(X)$ and $S_{2j}(X)$ are nonempty. Then choose any $Y \in S_{2i}(X)$ and $Z \in S_{2j}(X)$. By rearranging the coordinates, we may assume that
\begin{eqnarray}
\begin{array}{cccc}
X & = & \overbrace{1 \cdots 1}^w & \overbrace{0 \cdots 0}^v.
\end{array}
\end{eqnarray}
Since $d(X,Y) = 2i$ and $X$ and $Y$ have the same weight $w$, $Y+X$ must have exactly $i$ ones on the first $w$ coordinates and exactly $i$ ones on the last $v$ coordinates. This means $Y + X \in \mathcal V_i$. Similarly, $Z + X \in \mathcal V_j$. By definition of $m_{i,j}$, $d(Y+X, Z+X) \leq m_{i,j}$. Thus,
\begin{eqnarray}
d(Y,Z) = d(Y+X, Z+X) \leq m_{i,j} < d,
\end{eqnarray}
which is a contradiction since $Y$ and $Z$ are two different codewords in $\mathcal C$. Hence, $S_{2i}(X)$ is nonempty for at most one $i$ in $H_1$. It follows by Lemma \ref{lmmoi15} that
\begin{eqnarray}\label{mbehon}
\sum_{i \in H_1} \frac{|S_{2i}(X)|}{Q_i} \leq 1.
\end{eqnarray}
Taking sum of (\ref{mbehon}) over all $X \in \mathcal C$, we get
\begin{eqnarray}
\sum_{i \in H_1} \frac{B_{2i}}{Q_i} \leq 1.
\end{eqnarray}
\end{proof}

We now consider the case $m_{i,j} = d$ for some $i \not = j$ both in $H$. The following Lemma says that the existence of a codeword at distance $2i$ from $X$ may reduce the total number of codewords at distance $2j$ from $X$.

\begin{lemma}\label{lmQji}
Suppose $i \not = j$ both in $H$ such that $i + j \geq v$ and $m_{i,j} = d$. If $X$ is a codeword in $\mathcal C$ such that $|S_{2i}(X)| \geq 1$, then
\begin{eqnarray}
|S_{2j}(X)| \leq Q_{ji}.
\end{eqnarray}
\end{lemma}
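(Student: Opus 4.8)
The plan is to fix the codeword $X$ together with a witness $Y \in S_{2i}(X)$, rearrange coordinates so that $X = 1^w 0^v$, and then exploit the hypothesis $m_{i,j} = d$ to show that every $Z \in S_{2j}(X)$ is rigidly determined on most of its coordinates, collapsing the four-region picture into a doubly-constant-weight code supported on only $2i$ coordinates. Following the setup in the proof of Theorem \ref{dlmot1}, I assume $X = 1^w 0^v$ and fix some $Y \in S_{2i}(X)$, which exists since $|S_{2i}(X)| \geq 1$. Then $Y + X \in \mathcal V_i$, so $Y$ has $w-i$ ones among the first $w$ coordinates and $i$ ones among the last $v$. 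This partitions $\{1, \ldots, n\}$ into four regions according to the values of the pair $(X,Y)$: a region $R_{11}$ of size $w-i$ (where $X=Y=1$), a region $R_{10}$ of size $i$ (where $X=1, Y=0$), a region $R_{01}$ of size $i$ (where $X=0, Y=1$), and a region $R_{00}$ of size $v-i$ (where $X=Y=0$).

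The second step is to pin down the distance from $Y$. Let $Z \in S_{2j}(X)$. Since $i \neq j$ we have $Z \neq Y$, hence $d(Z,Y) \geq d$. On the other hand $Z + X \in \mathcal V_j$ and $Y + X \in \mathcal V_i$, so $d(Z,Y) = d(Z+X, Y+X) \leq m_{i,j} = d$ by the definition of $m_{i,j}$. Therefore $d(Z,Y) = d$ exactly; that is, $Z$ meets $Y$ at the extremal distance $m_{i,j}$.

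The heart of the argument, and the step I expect to be the main obstacle, is to convert this extremality into rigidity. Writing $a, b, c, e$ for the number of ones of $Z$ in $R_{11}, R_{10}, R_{01}, R_{00}$ respectively, the weight condition gives $a+b+c+e = w$, the condition $d(Z,X) = 2j$ gives $a + b = w - j$, and the condition $d(Z,Y) = d = 2(n-i-j)$ (using $m_{i,j} = 2(n-i-j)$ from Proposition \ref{tinhmij}, valid since $i+j \geq v \geq w$) gives $a + c = i+j-v$. These linear equations alone leave one free parameter, so the subtle point is that the box constraints must be invoked: solving the system yields $e = v-i+a$, and combining $a \geq 0$ with $e \leq |R_{00}| = v-i$ forces $a = 0$ and $e = v-i$. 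Consequently every $Z \in S_{2j}(X)$ is identically $0$ on $R_{11}$ and identically $1$ on $R_{00}$, and is free only on $R_{10} \cup R_{01}$, carrying $b = w-j$ ones on $R_{10}$ and $c = i+j-v$ ones on $R_{01}$.

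Finally I would restrict each $Z \in S_{2j}(X)$ to the $2i$ coordinates in $R_{10} \cup R_{01}$. Because all of these codewords agree outside this set, the restriction is injective and preserves pairwise Hamming distances, so its image is a code of minimum distance $\geq d$ whose members have $w-j$ ones on the $i$ coordinates of $R_{10}$ and $i+j-v$ ones on the $i$ coordinates of $R_{01}$, i.e.\ a $(w-j, i, i+j-v, i, d)$ doubly-constant-weight code. Hence $|S_{2j}(X)| \leq T(w-j, i, i+j-v, i, d)$, and complementing the second block (Proposition \ref{append} (iii)) gives $T(w-j, i, i+j-v, i, d) = T(w-j, i, v-j, i, d) \leq Q_{ji}$, which is the desired bound.
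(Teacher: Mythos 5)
Your proof is correct, and it shares the same skeleton as the paper's: fix a witness $Y \in S_{2i}(X)$, use $d \leq d(Y,Z) = d(Y+X,Z+X) \leq m_{i,j} = d$ to pin $d(Y,Z)$ at the extremal value, and convert that extremality into rigidity of $Z$ outside the $2i$ coordinates of $X \Delta Y$. But the two arguments finish by genuinely different mechanisms. The paper expresses the rigidity as a minimal-overlap condition after rearranging coordinates, uses it only to compute the additional distance $d(Z,X+Y) = 2v-w$, and then bounds $|S_{2j}(X)|$ by $A(n,\Lambda,d)$ for the four-element set $\Lambda = \{(0,w),(X,2j),(Y,d),(X+Y,2v-w)\}$, invoking Proposition \ref{mainp2} to identify this with $T(w-j,i,0,w-i,i+j-v,i,v-i,v-i,d)$ and Proposition \ref{append} to collapse it to $T(w-j,i,v-j,i,d)$. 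You instead make the rigidity fully explicit through the four-region count ($a+b=w-j$, $a+c=i+j-v$, hence $e=v-i+a$, and then $a=0$, $e=v-i$ from the box constraints --- you correctly flagged that the linear system alone leaves a free parameter, so the inequalities $a \geq 0$ and $e \leq v-i$ are the essential step), and then you restrict all of $S_{2j}(X)$ to $R_{10} \cup R_{01}$, exhibiting a $(w-j,i,i+j-v,i,d)$ doubly-constant-weight code directly, with only Proposition \ref{append} (iii) needed at the end. Your route is more elementary and self-contained: it bypasses the $|\Lambda|=4$ machinery of Proposition \ref{mainp2} entirely and makes visible exactly which coordinates remain free. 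The paper's route involves less computation inside the lemma because the bookkeeping is delegated to Proposition \ref{mainp2}, which the authors develop anyway for Theorem \ref{tr1}, so their lemma slots into the unified $A(n,\Lambda,d)$ framework. Both arguments arrive at $|S_{2j}(X)| \leq T(w-j,i,v-j,i,d) \leq Q_{ji}$.
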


\begin{proof}
Fix a codeword $Y \in S_{2i}(X)$. If $S_{2j}(X)$ is empty, then there is nothing to prove. Hence, we assume $|S_{2j}(X)| \geq 1$. Let $Z \in S_{2j}(X)$. By rearranging the coordinates, we may assume that
\begin{eqnarray}\label{aa}
\begin{array}{rcll}
X & = & \overbrace{1 \cdots 1}^w & \overbrace{0 \cdots 0}^v\\
\end{array}
\end{eqnarray}
As in the proof of Theorem \ref{dlmot1}, we can show that $Y+X \in \mathcal V_{i}$ and $Z+X \in \mathcal V_j$. By definition of $m_{i,j}$,
\begin{eqnarray}
d \leq d(Y,Z) = d(Y+X, Z+X) \leq m_{i,j} = d.
\end{eqnarray}
Thus,
\begin{eqnarray}
d(Y,Z) = d(Y+X,Z+X) = m_{i,j} = d.
\end{eqnarray}

Since $i + j \geq v \geq w$, by rearranging the first $w$ coordinates, we may assume that on the first $w$ coordinates:
\begin{eqnarray}\label{bb}
\begin{array}{rcccc}
Y + X & = & 1 \cdots 1 & 1 \cdots 1 & \overbrace{0 \cdots 0}^{w-i} \mid \cdots\\
Z + X & = & \underbrace{0 \cdots 0}_{w-j} & \underbrace{1 \cdots 1}_{i+j-w} & 1 \cdots 1\mid \cdots
\end{array}.
\end{eqnarray}
On the first $w$ coordinates, $Z + X$ must have exactly $i + j - w$ ones on the first $i$ coordinates (the other $w - i$ ones of $Z+X$ must be fixed since $d(Y+X,Z+X) = m_{i,j}$).

Similarly, since $i + j \geq v$, by rearranging the last $v$ coordinates, we may assume that on the last $v$ coordinates:
\begin{eqnarray}\label{cc}
\begin{array}{rccccc}
Y + X & = & \cdots \mid & 1 \cdots 1 & 1 \cdots 1 & \overbrace{0 \cdots 0}^{v-i}\\
Z + X & = & \cdots \mid & \underbrace{0 \cdots 0}_{v-j} & \underbrace{1 \cdots 1}_{i+j-v} & 1 \cdots 1
\end{array}.
\end{eqnarray}
On the last $v$ coordinates, $Z+X$ must have exactly $i+j-v$ ones on the first $i$ coordinates (the other $v - i$ ones of $Z+X$ must be fixed since $d(Y+X, Z+X) = m_{i,j}$).

From (\ref{aa}), (\ref{bb}), and (\ref{cc}), we get
\begin{eqnarray}
\nonumber d(Z,X+Y) & = & wt(X+Y+Z)\\
\nonumber & = & wt(X + (Y+X) + (Z+X))\\
\nonumber & = & (i+j-w) + (v-j + v-i)\\
& = & 2v - w.
\end{eqnarray}

Now the number of $Z \in S_{2j}(X)$ is upper bounded by $A(n,\Lambda,d)$, where $\Lambda = \{(0,w),(X,2j),(Y,d),(X+Y,2v-w)\}$. By Proposition \ref{tinhmij},
\begin{eqnarray}
d = m_{i,j} = 2(n-i-j).
\end{eqnarray}
Applying Proposition \ref{mainp2}, we get (by replacing $d = 2(n-i-j)$ and $n= w+v$)
\begin{eqnarray}
\nonumber A(n,\Lambda,d) & = & T(w-j,i,0,w-i,i+j-v,i,v-i,v-i,d)\\
& = & T(w-j,i,v-j,i,d),
\end{eqnarray}
where the last equality comes from Proposition \ref{append} in the appendix. Therefore,
\begin{eqnarray}\label{a1}
\nonumber |S_{2j}(X)| & \leq & A(n,\Lambda,d)\\
\nonumber & = & T(w-j,i,v-j,i,d) \\
& \leq & Q_{ji}.
\end{eqnarray}
\end{proof}

\begin{theorem}\label{tr4}
Suppose that $H_1$ is a subset of $H$ satisfying the following properties.
\begin{itemize}
\item $|H_1| \geq 2$.
\item There exist $i \not = j$ both in $H_1$ such that $i + j \geq v$ and $m_{i,j} = d$.
\item For all $k \not =  l$ both in $H_1$ such that either $k \not \in \{i,j\}$ or $l \not \in \{i,j\}$, we always have $m_{k,l} < d$.
\end{itemize}
Let $H_2 = H_1 \setminus \{i,j\}$. Then
\begin{eqnarray}\label{mot1}
\frac{Q_j - Q_{ji}}{Q_j Q_{ij}} B_{2i} + \frac{1}{Q_j}B_{2j} + \sum_{k \in H_2} \frac{1}{Q_k} B_{2k} \leq 1, \quad \mbox{ if } \frac{Q_{ij}}{Q_i} + \frac{Q_{ji}}{Q_j} \geq 1,
\end{eqnarray}
\begin{eqnarray}\label{hai2}
\frac{1}{Q_i}B_{2i} + \frac{Q_i - Q_{ij}}{Q_i Q_{ji}} B_{2j} + \sum_{k \in H_2} \frac{1}{Q_k} B_{2k} \leq 1, \quad \mbox{ if } \frac{Q_{ij}}{Q_i} + \frac{Q_{ji}}{Q_j} \geq 1,
\end{eqnarray}
\begin{eqnarray}\label{ba3}
\sum_{k \in H_1} \frac{1}{Q_k} B_{2k} \leq 1, \quad \mbox{ if } \frac{Q_{ij}}{Q_i} + \frac{Q_{ji}}{Q_j} \leq 1.
\end{eqnarray}
\end{theorem}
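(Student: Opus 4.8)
The plan is to reduce each of the three displayed inequalities to a \emph{pointwise} inequality attached to a single codeword $X \in \mathcal{C}$, and then average over $X$. Writing $s_k := |S_{2k}(X)|$ for $k \in H_1$, the identity $B_{2k} = \frac{1}{|\mathcal{C}|}\sum_{X \in \mathcal{C}}|S_{2k}(X)|$ means that each of (\ref{mot1}), (\ref{hai2}), (\ref{ba3}) is exactly the average over $X$ of the same linear expression in the $s_k$. So it suffices to prove the corresponding inequality in the variables $s_k$ for one fixed $X$.

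First I would record the combinatorial structure forced by the hypotheses on the $m_{k,l}$. Exactly as in the proof of Theorem \ref{dlmot1}, whenever $k \neq l$ lie in $H_1$ and $m_{k,l} < d$, the sets $S_{2k}(X)$ and $S_{2l}(X)$ cannot both be nonempty, since two such codewords would sit at distance $\leq m_{k,l} < d$. By the third hypothesis this applies to every pair except $\{i,j\}$. Consequently: (a) at most one index of $H_2 = H_1 \setminus \{i,j\}$ has $s_k > 0$; and (b) if $s_i > 0$ or $s_j > 0$, then $s_k = 0$ for all $k \in H_2$. The only pair allowed to be simultaneously active is $\{i,j\}$, which is consistent with $m_{i,j} = d$.

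Next I would split into four cases by the signs of $s_i, s_j$. Lemma \ref{lmmoi15} always gives $s_k \leq Q_k$; when both $s_i, s_j > 0$ the hypotheses $i+j \geq v$ and $m_{i,j}=d$ let me invoke Lemma \ref{lmQji} and its $i \leftrightarrow j$ mirror to obtain the sharper bounds $s_j \leq Q_{ji}$ and $s_i \leq Q_{ij}$. For (\ref{mot1}) the pointwise claim is $\frac{Q_j-Q_{ji}}{Q_j Q_{ij}} s_i + \frac{1}{Q_j} s_j + \sum_{k\in H_2}\frac{1}{Q_k}s_k \leq 1$. If $s_i=s_j=0$, by (a) only one $H_2$-term survives and it is $\leq 1$; if only $s_j>0$ the left side is $\frac{s_j}{Q_j}\leq 1$; if only $s_i>0$ it equals $\frac{Q_j-Q_{ji}}{Q_j Q_{ij}}s_i \leq \frac{Q_j-Q_{ji}}{Q_j Q_{ij}}Q_i$, which is $\leq 1$ precisely because $\frac{Q_{ij}}{Q_i}+\frac{Q_{ji}}{Q_j}\geq 1$; and if both are positive, substituting $s_i\leq Q_{ij}$ and $s_j\leq Q_{ji}$ gives $\frac{Q_j-Q_{ji}}{Q_j}+\frac{Q_{ji}}{Q_j}=1$. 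Averaging over $X$ yields (\ref{mot1}); inequality (\ref{hai2}) is the same statement with $i$ and $j$ interchanged, which is legitimate because all hypotheses on the pair $\{i,j\}$ are symmetric. For (\ref{ba3}) the pointwise claim is simply $\sum_{k\in H_1}\frac{s_k}{Q_k}\leq 1$; the only case not immediate from $s_k\leq Q_k$ together with (a),(b) is the doubly-active one, where $\frac{s_i}{Q_i}+\frac{s_j}{Q_j}\leq \frac{Q_{ij}}{Q_i}+\frac{Q_{ji}}{Q_j}\leq 1$ uses the opposite threshold hypothesis.

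The main obstacle is the singly-active case $s_i>0,\ s_j=0$ (and its mirror) in (\ref{mot1})/(\ref{hai2}): there only the crude bound $s_i\leq Q_i$ is available, since with $s_j=0$ Lemma \ref{lmQji} cannot sharpen $s_i$, so the coefficient of $B_{2i}$ must be small enough to absorb a full $Q_i$, and the arithmetic above shows this happens exactly when $\frac{Q_{ij}}{Q_i}+\frac{Q_{ji}}{Q_j}\geq 1$. A secondary point to verify is that the coefficient $\frac{Q_j-Q_{ji}}{Q_j Q_{ij}}$ is nonnegative; this holds because the conditional count $T(w-j,i,v-j,i,d)$ of Lemma \ref{lmQji} is dominated by the unconditional count $T(j,w,j,v,d)$ of Lemma \ref{lmmoi15} (the former is $A(n,\Lambda,d)$ for a $\Lambda$ containing the latter's constraint set), so one may take $Q_{ji}\leq Q_j$, ensuring that replacing $s_i$ by its upper bound in the doubly-active case moves the inequality in the correct direction.
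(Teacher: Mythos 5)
Your proposal is correct and follows essentially the same route as the paper's proof: reduce each inequality to a pointwise bound on the quantities $|S_{2k}(X)|$ for a fixed codeword $X$, combine the disjointness argument of Theorem \ref{dlmot1} with Lemmas \ref{lmmoi15} and \ref{lmQji}, split into cases according to which of $S_{2i}(X)$, $S_{2j}(X)$ are nonempty, and average over $X \in \mathcal C$. Your closing observation that one should take $Q_{ji} \leq Q_j$ (so that the coefficient $\frac{Q_j - Q_{ji}}{Q_j Q_{ij}}$ of $B_{2i}$ is nonnegative and the substitution $|S_{2i}(X)| \leq Q_i$ or $|S_{2i}(X)| \leq Q_{ij}$ moves the inequality the right way) addresses a point the paper leaves implicit, and your justification of it is sound.
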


\begin{proof}
We first prove (\ref{mot1}). It suffices to show that for every codeword $X$ in $\mathcal C$,
\begin{eqnarray}\label{claimm}
\frac{Q_j - Q_{ji}}{Q_j Q_{ij}} |S_{2i}(X)| + \frac{1}{Q_j} |S_{2j}(X)| + \sum_{k \in H_2} \frac{1}{Q_k} |S_{2k}(X)| \leq 1,
\end{eqnarray}
if $\frac{Q_{ij}}{Q_i} + \frac{Q_{ji}}{Q_j} \geq 1$. Let $X$ be any codeword in $\mathcal C$. By Lemma \ref{lmmoi15},
\begin{eqnarray}
|S_{2i}(X)| \leq Q_i \quad \mbox{ and } \quad |S_{2j}(X)| \leq Q_j.
\end{eqnarray}
By Lemma \ref{lmQji},
\begin{eqnarray}
|S_{2i}(X)| \leq Q_{ij} \mbox{ if } |S_{2j}(X)| \geq 1,\\
|S_{2j}(X)| \leq Q_{ji} \mbox{ if } |S_{2i}(X)| \geq 1.
\end{eqnarray}
We prove (\ref{claimm}) by considering the following three cases.

{\em Case $1$: $|S_{2i}(X)| = 0$.} Proving (\ref{claimm}) is exactly the same as proving (\ref{mbehon}). So we are done.

{\em Case $2$: $|S_{2i}(X)| \geq 1$ and $|S_{2j}(X)| = 0$.} Since $|S_{2i}(X)| \geq 1$, $|S_{2k}(X)| = 0$ for every $k \in H_2$ by Theorem \ref{dlmot1}. Hence, to prove (\ref{claimm}), we only need to prove that
\begin{eqnarray}
(Q_j - Q_{ji}) |S_{2i}(X)| \leq Q_j Q_{ij}.
\end{eqnarray}
By hypothesis, $\frac{Q_{ij}}{Q_i} + \frac{Q_{ji}}{Q_j} \geq 1$. Thus, $(Q_j - Q_{ji}) Q_i \leq Q_j Q_{ij}$ and hence
\begin{eqnarray}
(Q_j - Q_{ji}) |S_{2i}(X)| \leq (Q_j - Q_{ji}) Q_i \leq Q_j Q_{ij}.
\end{eqnarray}

{\em Case $3$:} $|S_{2i}(c)|  \geq 1$ and $|S_{2j}(c)| \geq 1$. As in Case $2$, $|S_{2k}(X)| = 0$ for every $k \in H_2$. We have
\begin{eqnarray}
\nonumber \displaystyle \frac{Q_j - Q_{ji}}{Q_jQ_{ij}} |S_{2i}(X)| + \frac{1}{Q_j}|S_{2j}(X)| & \leq & \displaystyle \frac{Q_j - Q_{ji}}{Q_jQ_{ij}} Q_{ij} + \frac{1}{Q_j}Q_{ji} \\
\nonumber & = &\displaystyle 1- \frac{Q_{ji}}{Q_j} + \frac{Q_{ji}}{Q_j}\\
& = & 1.
\end{eqnarray}
Therefore, (\ref{claimm}) is proved and so is (\ref{mot1}).

By symmetry, (\ref{hai2}) follows.

We now prove (\ref{ba3}). It suffices to show that for every codeword $X$ in $\mathcal C$,
\begin{eqnarray}\label{claimmm}
\sum_{k \in H_1} \frac{1}{Q_k} |S_{2k}(X)| \leq 1,
\end{eqnarray}
if $\frac{Q_{ij}}{Q_i} + \frac{Q_{ji}}{Q_j} \leq 1$. If either $|S_{2i}(X)| = 0$ or $|S_{2j}(X)| = 0$, then proving (\ref{claimmm}) is exactly the same as proving (\ref{mbehon}). Hence, suppose that $|S_{2i}(X)| \geq 1$ and $|S_{2j}(X)| \geq 1$. As in Case $2$, $|S_{2k}(X)| = 0$ for every $k \in H_2$. We have
\begin{eqnarray}
\frac{1}{Q_i}|S_{2i}(X)| + \frac{1}{Q_j} |S_{2j}(X)| \leq \frac{1}{Q_i}Q_{ij} + \frac{1}{Q_j}Q_{ji} \leq 1.
\end{eqnarray}
\end{proof}

We now specify which $H_1$ are used in Theorems \ref{dlmot1} and \ref{tr4}. Let
\begin{eqnarray}
\alpha = d/2 - (n-2w)
\end{eqnarray}
and let
\begin{eqnarray}
\alpha_1 = \left\lfloor \frac{\alpha+1}{2} \right\rfloor \mbox{ and }\alpha_2 = \left\lfloor \frac{\alpha}{2} \right\rfloor
\end{eqnarray}
so that $\alpha_1 + \alpha_2 = \alpha$. Also, let
\begin{eqnarray}
i_0 = w - \alpha_1 \mbox{ and } j_0 = w - \alpha_2.
\end{eqnarray}

\begin{itemize}
\item {\em Case $1$: $\alpha$ is even.} In this case, $i_0 = j_0$. We apply Theorem \ref{dlmot1} for
    \begin{eqnarray}
    H_1 = \{j_0, j_0 + 1, \ldots, w\}
    \end{eqnarray}
    and apply Theorem \ref{tr4} for
    \begin{eqnarray}
    H_1 = \{i_0 - \epsilon, j_0 + \epsilon, j_0 + \epsilon + 1, \ldots, w\}
    \end{eqnarray}
    (with $i = i_0 - \epsilon$ and $j = j_0 + \epsilon$) for each $\epsilon = 1, 2, \cdots ,w-j_0$ .

\item {\em Case $2$: $\alpha$ is odd.} In this case, $i_0 < j_0$. We apply Theorem \ref{tr4} for
    \begin{eqnarray}
    H_1 = \{i_0 - \epsilon, j_0 + \epsilon, j_0 + \epsilon + 1, \ldots, w\}
    \end{eqnarray}
    (with $i = i_0 - \epsilon$ and $j = j_0 + \epsilon$) for each $\epsilon = 0,1, \cdots ,w-j_0$.
\end{itemize}

\begin{example}
Consider $(n,d,w)=(27,8,13)$. We have $\alpha = d/2 - (n - 2w) = 3$ is odd. Hence, $\alpha_1 = 2$ and $\alpha_2 = 1$. So, $i_0 = 11$ and $j_0 = 12$.
We can apply Theorem \ref{tr4} for $H_1 = \{i = i_0, j = j_0, w\} = \{11, 12, 13\}$ (with $\epsilon = 0$). We have
\begin{eqnarray}
\begin{array}{l}
\nonumber Q_i  =   26  \geq T(2,13, 3, 14, 8) = T(11,13,11,14,8),\\
\nonumber Q_j =  1  = T(1, 13, 2, 14, 8) = T(12,13,12,14,8),\\
\nonumber Q_{ij} =  20  \geq T(2, 12, 3, 12, 8),\\
\nonumber Q_{ji} = 1 = T(1, 11, 2, 11, 8),
\end{array}
\end{eqnarray}
and
\begin{eqnarray}
\begin{array}{l}
\nonumber Q_k = 1 = T(0, 13, 1, 14, 8) = T(13, 13, 13, 14, 8)
\end{array}
\end{eqnarray}
for $k = 13$. Since $\frac{Q_{ij}}{Q_i} + \frac{Q_{ji}}{Q_j} = \frac{20}{26} + \frac{1}{1} \geq 1$, Theorem \ref{tr4} gives
\begin{eqnarray}\label{lc1}
B_{24}+ B_{26} \leq 1
\end{eqnarray}
and
\begin{eqnarray}
\frac{1}{26} B_{22} + \frac{26 -20}{26} B_{24} + B_{26} \leq 1.
\end{eqnarray}
The later constraint is equivalent to
\begin{eqnarray}\label{lc2}
B_{22} + 6 B_{24} + 26 B_{26} \leq 26.
\end{eqnarray}
For $H_1=\{10, 13\}$ (with $\epsilon = 1$), Theorem \ref{tr4} gives less effective linear constraints.
\end{example}

\begin{table}[!t]
\renewcommand{\arraystretch}{1.3}
\caption{New upper bounds for $A(n,d,w)$}
\label{table_example2}
\centering
\begin{tabular}{|r|r|r|r|r|r|r|}
\hline
%n & d & $ \substack{\\ \mbox{best}\\ \mbox{lower} \\ \mbox{bound}\\ \mbox{known} \\}$ & $\substack{\\ \mbox{new}\\ \mbox{upper} \\ \mbox{bound} \\}$ &$\substack{\\ \mbox{best upper} \\ \mbox{bound} \\ \mbox{previously} \\ \mbox{known} \\}$ & $\substack{\\ \mbox{Schrijver}\\ \mbox{bound} \\}$\\
   & &  & \mbox{best}  & \mbox{best upper} &                 & \mbox{} \\
   & &  & \mbox{lower} & \mbox{bound}      & \mbox{new}      & \mbox{} \\
   & &  & \mbox{bound} & \mbox{previously} & \mbox{upper}    & \mbox{Schrijver} \\
n & d &w& \mbox{known} & \mbox{known}      & \mbox{bound}    & \mbox{bound} \\
\hline
\hline

20 & 6 & 8 & 588 & 1107 & 1106 & 1136\\
\hline
\hline
22 & 8 & 10 & 616 & 634 & 630 & 634\\

\hline
23 & 8 & 9 & 400 & 707 & 703  & 707\\

\hline
26 & 8 & 9  & 887  & 2108 & 2104 & 2108\\

26 & 8 & 11 & 1988 & 5225 & 5208 & 5225\\

\hline
27 & 8 & 9  & 1023 & 2914 & 2882 &  2918\\

27 & 8 & 11 & 2404 & 7833 & 7754 & 7833\\

27 & 8 & 12 & 3335 & 10547 & 10460 &  10697\\

27 & 8 & 13 & 4094 & 11981 & 11897 & 11981\\

\hline
28 & 8 &  9 & 1333 & 3895 & 3886 & 3900\\

28 & 8 & 11 & 3773 & 11939 & 11896 &  12025\\

28 & 8 & 12 & 4927 & 17011 & 17008 & 17011\\

28 & 8 & 13 & 6848 & 21152 & 21148 &  21152\\
\hline
\hline
23 & 10 & 9 & 45 & 81 & 79 & 82 \\

\hline
25 & 10 & 11 & 125 & 380 & 379 &380 \\

25 & 10 & 12 & 137 & 434 & 433 & 434 \\
\hline

26 & 10 & 11 & 168 & 566 & 565 & 566 \\

26 & 10 & 12 & 208 &  702 & 691 & 702\\
\hline
27 & 10 & 11 & 243 & 882 & 871 & 882\\

27 & 10 & 12 & 351 & 1201 & 1190 & 1201\\

27 & 10 & 13 & 405 & 1419 & 1406 & 1419\\
\hline
28 & 10 & 11 & 308 & 1356 & 1351 & 1356\\
\hline
\hline
25& 12 & 10 & 28 & 37 & 36 & 37\\
\hline
\end{tabular}
\end{table}

When $\alpha \leq 0$, there is no set $H_1$ satisfying Theorem \ref{tr4}. In this case, the following type of linear constraints which comes from \cite[Proposition 17]{avz} is very useful. As in \cite{avz}, let $T'(w_1, n_1, w_2, n_2, d)$ be the largest possible size of a $(w_1, n_1, w_2, n_2, d)$ doubly-bounded-weight code (a $(w_1, n_1, w_2, n_2, d)$ {\it doubly-bounded-weight code} is an $(n_1+n_2, d, w_1+w_2)$ constant-weight code such that every codeword has at most $w_1$ ones on the first $n_1$ coordinates). Tables for upper bounds on $T'(w_1,n_1,w_2,n_2,d)$ can be found on Erik Agrell's website http://webfiles.portal.chalmers.se/s2/research/kit/bounds/dbw.html.

\begin{theorem}\label{tr5}
Let $\delta = d/2$. For $i, j \in \{\delta, \delta+1, \ldots, w\}$ with $i \not = j$. If $i + j \leq n - \delta$, define $P_{ij}$ and $P_{ji}$ as any nonnegative integers such that
\begin{eqnarray}
P_{ij} \geq \min \{ P_i, T'(\Delta, j, i-\Delta, n-w-j, 2i-2\Delta\},\\
P_{ji} \geq \min \{ P_j, T'(\Delta, i, j-\Delta, n-w-i, 2j-2\Delta\},
\end{eqnarray}
where $\Delta := w - \delta$. Also, define $P_k := Q_k$ for each $k \in H$. Then
\begin{eqnarray}
P_{ji} B_{2i} + (P_i - P_{ij}) B_{2j} \leq P_i P_{ji}, \quad \mbox{ if } \frac{P_{ij}}{P_i} + \frac{P_{ji}}{P_j} > 1,\\
(P_j - P_{ji}) B_{2i} + P_{ij} B_{2j} \leq P_j P_{ij}, \quad \mbox{ if } \frac{P_{ij}}{P_i} + \frac{P_{ji}}{P_j} > 1,\\
P_j B_{2i} + P_i B_{2j} \leq P_i P_j, \quad \mbox{ if } \frac{P_{ij}}{P_i} + \frac{P_{ji}}{P_j} \leq 1.
\end{eqnarray}
\end{theorem}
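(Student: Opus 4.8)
The plan is to follow the template of Theorems~\ref{dlmot1} and \ref{tr4}: fix a codeword $X \in \mathcal C$, prove each of the three inequalities at the level of the counts $|S_{2i}(X)|$ and $|S_{2j}(X)|$, and then sum over all $X \in \mathcal C$ and divide by $|\mathcal C|$, using $B_{2i} = \frac{1}{|\mathcal C|}\sum_{X \in \mathcal C}|S_{2i}(X)|$. Since $H_1 = \{i,j\}$ has only two elements here, there is no analog of the residual sum over $H_2$, and each displayed inequality is exactly the two-index, cleared-denominator form of the corresponding conclusion (\ref{mot1})--(\ref{ba3}) of Theorem~\ref{tr4} with $Q$ replaced by $P$. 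The unconditional bounds $|S_{2i}(X)| \le P_i$ and $|S_{2j}(X)| \le P_j$ are immediate from Lemma~\ref{lmmoi15}, recalling $P_k = Q_k$.

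The only genuinely new ingredient is the conditional bound playing the role of Lemma~\ref{lmQji}: I will show that if $S_{2j}(X) \neq \emptyset$ then $|S_{2i}(X)| \le P_{ij}$, and symmetrically that if $S_{2i}(X) \neq \emptyset$ then $|S_{2j}(X)| \le P_{ji}$. To prove the first, normalize $X$ to the word with $w$ ones followed by $v$ zeros and fix some $Y \in S_{2j}(X)$; then $Y$ carries $w-j$ ones on the first $w$ coordinates and exactly $j$ ones on the last $v$, while every $Z \in S_{2i}(X)$ carries $w-i$ ones on the first $w$ and exactly $i$ ones on the last $v$. Since $i \neq j$, each such $Z$ differs from $Y$ and hence $d(Y,Z) \ge d$. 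Writing $a$ and $b$ for the overlaps of the supports of $Y$ and $Z$ on the first $w$ and on the last $v$ coordinates respectively, a direct count gives $d(Y,Z) = 2w - 2a - 2b$, so $d(Y,Z) \ge d = 2\delta$ forces $a + b \le w - \delta = \Delta$; in particular $b \le \Delta$.

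The key step is then to restrict all of $S_{2i}(X)$ to the last $v$ coordinates. Each restricted word has weight $i$ and, by $b \le \Delta$, at most $\Delta$ ones among the $j$ coordinates on which $Y$ is supported; taking those $j$ coordinates as the first block and the remaining $n-w-j$ as the second, the restricted words form a doubly-bounded-weight code with parameters $(\Delta, j, i-\Delta, n-w-j, 2i-2\Delta)$. For its minimum distance, note that two distinct $Z, Z' \in S_{2i}(X)$ each carry $w-i$ ones on the first $w$ coordinates and so disagree in at most $2(w-i)$ of them; since $d(Z,Z') \ge d$, their restrictions to the last $v$ coordinates are at distance at least $d - 2(w-i) = 2i - 2\Delta$, which is positive precisely in the range where $T'$ improves on $P_i$, and which also makes the restriction map injective (when $i \le \Delta$ the minimum below is simply $P_i$ and the conditional bound collapses to the unconditional one). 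Hence $|S_{2i}(X)| \le T'(\Delta, j, i-\Delta, n-w-j, 2i-2\Delta)$, and combining with $|S_{2i}(X)| \le P_i$ gives $|S_{2i}(X)| \le \min\{P_i, T'(\ldots)\} \le P_{ij}$. Here the hypothesis $i + j \le n - \delta$ is used: by Proposition~\ref{tinhmij} it guarantees $m_{i,j} \ge d$, so that $S_{2i}(X)$ and $S_{2j}(X)$ can indeed be simultaneously nonempty and the conditional bound is the binding one; when $m_{i,j} < d$ the sets are mutually exclusive and Theorem~\ref{dlmot1} already applies.

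With both conditional bounds in hand, the three inequalities follow from the same case analysis as in Theorem~\ref{tr4}, carried out for a fixed $X$: if one of $S_{2i}(X), S_{2j}(X)$ is empty, the unconditional bounds $P_i, P_j$ together with the hypothesis on $\frac{P_{ij}}{P_i} + \frac{P_{ji}}{P_j}$ suffice; if both are nonempty, substituting $|S_{2i}(X)| \le P_{ij}$ and $|S_{2j}(X)| \le P_{ji}$ makes each inequality hold, with equality in the tight case. Summing the resulting per-codeword inequalities over $X \in \mathcal C$ and dividing by $|\mathcal C|$ yields the stated bounds on $B_{2i}$ and $B_{2j}$. I expect the main obstacle to be the middle step, namely pinning down the doubly-bounded-weight structure exactly: the two block sizes $j$ and $n-w-j$, the weight split $\Delta$ and $i-\Delta$, and above all the minimum distance $2i-2\Delta$ together with the injectivity of the restriction. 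Everything downstream is then a mechanical repetition of the argument for Theorem~\ref{tr4}.
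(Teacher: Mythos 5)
Your proposal is correct, but there is no in-paper argument to compare it against: the paper never proves Theorem~\ref{tr5}, it imports the statement from Agrell, Vardy, and Zeger \cite[Proposition 17]{avz}. What you have written is, in effect, a reconstruction of the AVZ argument expressed in this paper's machinery, and it is sound. The unconditional bounds come from Lemma~\ref{lmmoi15} (with $P_k = Q_k$), the concluding three-case analysis is literally that of Theorem~\ref{tr4} (with $H_2$ empty), and your one genuinely new ingredient --- the conditional bound $|S_{2i}(X)| \le P_{ij}$ when $S_{2j}(X) \neq \emptyset$ --- is the doubly-bounded-weight analogue of Lemma~\ref{lmQji}: instead of pinning down a fourth reference word $X+Y$ and invoking Proposition~\ref{mainp2} (which only works in the extremal regime $i+j \ge v$, $m_{i,j} = d$ of Theorem~\ref{tr4}), you project $S_{2i}(X)$ onto the $v$ coordinates off the support of $X$. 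Your counts are right: $d(Y,Z) = 2w - 2a - 2b \ge 2\delta$ forces $b \le \Delta$, the projected code has constant weight $i$ with at most $\Delta$ ones on the $j$ coordinates supporting $Y$, its minimum distance is at least $d - 2(w-i) = 2i - 2\Delta$, and projection is injective exactly when $i > \Delta$; this is why the $T'$ bound is available in the whole regime $i + j \le n - \delta$ (equivalently $m_{i,j} \ge d$, by Proposition~\ref{tinhmij}), where the mutual-exclusion argument of Theorem~\ref{dlmot1} fails. Two details should be tightened. First, when $i \le \Delta$ the projection is not injective and $T'$ with second weight $i - \Delta \le 0$ is undefined, so the theorem must be read with the convention that the minimum defining $P_{ij}$ is then $P_i$; you flag this, correctly, but it enters as a convention on the statement, not as an output of your projection argument. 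Second, in the both-nonempty case, substituting the bounds $|S_{2i}(X)| \le P_{ij}$ and $|S_{2j}(X)| \le P_{ji}$ into the left-hand sides is legitimate only when the coefficients $P_i - P_{ij}$ and $P_j - P_{ji}$ are nonnegative; since $P_{ij}, P_{ji}$ are only bounded below, either note that one may assume $P_{ij} \le P_i$ and $P_{ji} \le P_j$ without loss of generality, or dispose of a negative coefficient by the trivial estimate $(P_i - P_{ij})|S_{2j}(X)| \le 0$ --- a one-line fix, and the paper's own proof of Theorem~\ref{tr4} makes the same tacit assumption about $Q_j - Q_{ji}$.
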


By adding the linear constraints in Theorems \ref{tr1}, \ref{tr3}, \ref{dlmot1}, \ref{tr4}, and \ref{tr5} to Schrijver's semidefinite programming bound (\ref{maxcwc}), we obtained new upper bounds on $A(n,d,w)$ shown on Table \ref{table_example2}. As before, all computations were done by the same algorithm SDPT3 at the same server.

\appendix[Upper Bounds on $T(w_1, n_1, w_2, n_2, w_3, n_3, w_4, n_4, d)$]
To apply Theorem \ref{tr1}, we need tables of upper bounds on $T(w_1, n_1, w_2, n_2, w_3, n_3, w_4, n_4, d)$. However, there are no such tables available since this is the first time the function $T(w_1, n_1, w_2, n_2, w_3, n_3, w_4, n_4, d)$ is introduced. We show here some elementary properties that are used to obtain upper bounds on $T(w_1, n_1, w_2, n_2, w_3, n_3, w_4, n_4, d)$.

In general, let us define $T(\{(w_i,n_i)\}_{i=1}^t,d)$ as follows. For $t \geq 1$, a $(\{(w_i,n_i)\}_{i=1}^t,d)$ {\it multiply constant-weight code} is a $(\sum_{i=1}^t n_i, d)$ code such that there are exactly $w_i$ ones on the $n_i$ coordinates. When $t=1$ this is definition of an $(n_1, d, w_1)$ constant-weight code, when $t=2$ this is definition of a $(w_1, n_1, w_2, n_2, d)$ doubly-constant-weight code, etc.. Let $T(\{(w_i,n_i)\}_{i=1}^t,d)$ be the largest possible size of a $(\{(w_i,n_i)\}_{i=1}^t,d)$ multiply constant-weight code.

We present here elementary properties that are used to get upper bounds on $T(\{(w_i,n_i)\}_{i=1}^t,d)$. The proofs of these properties are similar to those for $A(n,d,w)$ or $T(w_1,n_1,w_2,n_2,d)$, and hence are omitted. Upper bounds on $T(w_1, n_1, w_2, n_2, w_3, n_3, w_4, n_4, d)$ that we used in Theorem \ref{tr1} are the best upper bounds obtained from these properties.

\begin{proposition}\label{append}
\begin{itemize}
\item [(i)] If $d$ is odd then,
\begin{eqnarray}
T(\{(w_i,n_i)\}_{i=1}^t,d) = T(\{(w_i,n_i)\}_{i=1}^t,d+1).
\end{eqnarray}

\item [(ii)] If $w_j=0$ for some $j \in \{1, 2, \ldots, t\}$, then
\begin{eqnarray}
T(\{(w_i,n_i)\}_{i=1}^t,d) = T(\{(w_i,n_i)\}_{i\not=j},d).
\end{eqnarray}

\item [(iii)] $T(\{(w_i,n_i)\}_{i=1}^t,d)$ does not change if we replace any $w_i$ by $n_i - w_i$.

\item [(iv)] $T(\{(w_i,n_i)\}_{i=1}^t,2) = \prod_{i=1}^t \left(n_i \atop w_i \right)$.

\item [(v)] $T(\{(w_i,n_i)\}_{i=1}^t, 2\sum_{i=1}^t w_i) = \min_{1 \leq i \leq t} \left\lfloor \frac{n_i}{w_i} \right \rfloor$.

\item [(vi)] $T(\{(w_i,n_i)\}_{i=1}^t,d) = 1$ if $2 \sum_{i=1}^t w_i < d$.
\end{itemize}
\end{proposition}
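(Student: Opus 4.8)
The plan is to handle the six parts in turn, all resting on one unifying observation: every codeword of a $(\{(w_i,n_i)\}_{i=1}^t,d)$ multiply constant-weight code has the same total weight $w := \sum_{i=1}^t w_i$, so for any two codewords $X$ and $Y$ we have $d(X,Y) = |X \Delta Y| = 2w - 2|X \cap Y|$, which is always even and lies in $\{0,2,\ldots,2w\}$. This single fact drives parts (i), (v), and (vi) directly. For (i), since all pairwise distances are even, imposing minimum distance $\geq d$ with $d$ odd gives exactly the same feasible codes as imposing minimum distance $\geq d+1$, so the two optima coincide.

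For (ii) and (iii) I would exhibit distance-preserving bijections between the relevant code classes. If $w_j = 0$, every codeword is identically zero on block $j$, so deleting those $n_j$ coordinates is a bijection onto codes over the remaining blocks that leaves every pairwise distance unchanged; hence the maximal sizes agree. For (iii), flipping all $n_i$ coordinates of block $i$ (replacing each codeword's block-$i$ entries by their complements) is an involution of $\mathcal F^{\sum n_i}$ that sends a codeword with $w_i$ ones in block $i$ to one with $n_i - w_i$ ones there, preserves the weights in all other blocks, and preserves every distance (complementing the same coordinates in both vectors does not alter the symmetric difference restricted to those coordinates). This bijection identifies the two classes, giving equal optima.

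The remaining three parts are extremal evaluations. For (iv), $d=2$ together with evenness of distances means the only constraint is that codewords be distinct, so one may take all $\prod_{i=1}^t \left( n_i \atop w_i \right)$ admissible vectors, and this product is plainly the maximum. For (vi), the largest possible distance is $2w = 2\sum_{i=1}^t w_i$; if $2\sum_{i=1}^t w_i < d$, no two distinct codewords can be at distance $\geq d$, forcing size at most $1$, while a single codeword always exists. The one part needing a genuine construction is (v): here $d = 2w$ forces $|X \cap Y| = 0$, i.e.\ pairwise disjoint supports, so within block $i$ the codewords restrict to pairwise disjoint $w_i$-subsets of an $n_i$-set, giving at most $\left\lfloor n_i/w_i \right\rfloor$ of them and hence size at most $\min_i \left\lfloor n_i/w_i \right\rfloor$.

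For the matching lower bound in (v) I would set $m = \min_{1 \leq i \leq t} \left\lfloor n_i/w_i \right\rfloor$, choose in each block $i$ a family of $m$ pairwise disjoint $w_i$-subsets (possible since $m \leq \left\lfloor n_i/w_i \right\rfloor$ gives $m w_i \leq n_i$), index them $1,\ldots,m$, and define the $k$-th codeword by selecting the $k$-th subset in every block; these $m$ codewords are disjoint in each block and so are pairwise at distance $2w$. The main (minor) obstacle is exactly this simultaneous packing in (v) — coordinating the disjoint selections across all $t$ blocks so a single global index yields genuinely disjoint codewords — but since the blocks are independent this is immediate once $m \leq \left\lfloor n_i/w_i \right\rfloor$ holds for every $i$.
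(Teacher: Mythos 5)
Your proof is correct and uses exactly the standard arguments the paper has in mind when it omits these proofs as ``similar to those for $A(n,d,w)$ or $T(w_1,n_1,w_2,n_2,d)$'': the evenness of distances between equal-weight words drives (i), (iv), (v), (vi), and the distance-preserving deletion and block-complementation bijections handle (ii) and (iii), with the disjoint-support packing giving both directions of (v). Nothing further is needed.
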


\begin{remark}
By (i) and (iv), we can always assume that $d$ is even and $d \geq 4$. By (ii) and (iii), we may assume that $0<2w_i\leq n_i$ for each $i$.
Also, by (v) and (vi), we can assume that $d < 2\sum_{i=1}^t w_i$.
\end{remark}

The next proposition can be used to reduce the size of $\{(w_i,n_i)\}_{i=1}^t$ from $t$ to $t-1$. When the size of the set is $2$, we use known upper bounds on $T(w_1, n_1, w_2, n_2, d)$.

\begin{proposition} If $t \geq 2$, then
\begin{eqnarray}
T(\{(w_i,n_i)\}_{i=1}^t,d) \leq T(\{(w'_i,n'_i)\}_{i=1}^{t-1},d),
\end{eqnarray}
where $w'_i = w_i, n'_i = n_i$ for $i = 1, 2, \ldots, t-2$, and $w'_{t-1} = w_{t-1} + w_t, n'_{t-1} = n_{t-1} + n_t$.
\end{proposition}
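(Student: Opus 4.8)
The plan is to argue by a straightforward containment of feasible sets: any multiply constant-weight code for the finer partition (with $t$ blocks) is automatically a multiply constant-weight code for the coarser partition obtained by merging the last two blocks, so the largest achievable size cannot decrease when we pass to the relaxed constraints. The inequality then falls out of the definitions.

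First I would fix a $(\{(w_i,n_i)\}_{i=1}^t,d)$ multiply constant-weight code $\mathcal C$ of maximum size, so that $|\mathcal C| = T(\{(w_i,n_i)\}_{i=1}^t,d)$. The coordinate set of $\mathcal C$ is partitioned into $t$ consecutive blocks of sizes $n_1, \ldots, n_t$. I would then simply regard the last two blocks, of sizes $n_{t-1}$ and $n_t$, as a single merged block of size $n_{t-1} + n_t = n'_{t-1}$. This does not alter the underlying code at all: the total length $\sum_{i=1}^t n_i = \sum_{i=1}^{t-1} n'_i$ is unchanged, the codewords are literally the same binary vectors, and hence the minimum distance of $\mathcal C$ is still at least $d$.

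Next I would verify that $\mathcal C$, viewed under this coarser partition, satisfies the weight constraints defining a $(\{(w'_i,n'_i)\}_{i=1}^{t-1},d)$ multiply constant-weight code. For the indices $i = 1, \ldots, t-2$ there is nothing to check, since $w'_i = w_i$ and $n'_i = n_i$. For the merged block, every codeword of $\mathcal C$ has exactly $w_{t-1}$ ones among the $n_{t-1}$ coordinates of the old block $t-1$ and exactly $w_t$ ones among the $n_t$ coordinates of the old block $t$; it therefore has exactly $w_{t-1} + w_t = w'_{t-1}$ ones among the $n'_{t-1}$ coordinates of the merged block. Thus $\mathcal C$ is indeed a $(\{(w'_i,n'_i)\}_{i=1}^{t-1},d)$ multiply constant-weight code.

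Since $T(\{(w'_i,n'_i)\}_{i=1}^{t-1},d)$ is by definition the largest possible size of such a code, I conclude $T(\{(w_i,n_i)\}_{i=1}^t,d) = |\mathcal C| \leq T(\{(w'_i,n'_i)\}_{i=1}^{t-1},d)$, which is exactly the claimed bound. I do not anticipate any genuine obstacle in this argument; the only point worth keeping in mind is the direction of the inequality, which simply records that replacing one combined weight constraint by two separate weight constraints can only shrink the family of admissible codes, so the optimum over the finer family is no larger than the optimum over the merged family.
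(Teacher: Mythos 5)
Your proof is correct: the containment argument (every code satisfying the finer weight constraints is automatically a code for the merged-block constraints, so the maximum cannot decrease upon merging) is exactly the standard relaxation argument intended here. The paper itself omits the proof, remarking only that it is similar to known arguments for $A(n,d,w)$ and $T(w_1,n_1,w_2,n_2,d)$, and yours is precisely that argument, so there is nothing to compare beyond full agreement.
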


\begin{proposition}
If $w_i >0$, then
\begin{eqnarray}
T(\{(w_i,n_i)\}_{i=1}^t,d) \leq \left \lfloor \frac{n_i}{w_i} T(\{(w'_i,n'_i)\}_{i=1}^t,d) \right \rfloor,
\end{eqnarray}
where $\{(w'_i,n'_i)\}_{i=1}^t$ is obtained from $\{(w_i,n_i)\}_{i=1}^t$ by replacing the pair $(w_i,n_i)$ by $(w_i-1, n_i -1)$.
\end{proposition}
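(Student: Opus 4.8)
The plan is to run the classical averaging (double-counting) argument that underlies the familiar constant-weight bound $A(n,d,w)\le\lfloor\frac{n}{w}A(n-1,d,w-1)\rfloor$, now carried out inside the single block of $n_i$ coordinates on which every codeword carries exactly $w_i$ ones.

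First I would fix an optimal $(\{(w_i,n_i)\}_{i=1}^t,d)$ multiply constant-weight code $\mathcal C$, so that $|\mathcal C|=T(\{(w_i,n_i)\}_{i=1}^t,d)$, and write $B_i$ for the block of $n_i$ coordinates associated with the pair $(w_i,n_i)$. I would then count, in two ways, the number $N$ of incident pairs $(X,p)$ with $X\in\mathcal C$, $p\in B_i$, and $X$ having a one in coordinate $p$. Since every codeword has exactly $w_i$ ones on $B_i$, summing over codewords gives $N=w_i|\mathcal C|$.

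Next I would estimate $N$ by summing over coordinates instead. For a fixed $p\in B_i$, set $\mathcal C_p=\{X\in\mathcal C : X_p=1\}$. Deleting coordinate $p$ from every codeword of $\mathcal C_p$ produces a code on $\sum_j n_j - 1$ coordinates in which the $i$-th block now has $n_i-1$ coordinates and each codeword has exactly $w_i-1$ ones there, while the other blocks are untouched; hence the deleted family is a $(\{(w'_i,n'_i)\}_{i=1}^t,d)$ multiply constant-weight code in the notation of the statement. Because all codewords of $\mathcal C_p$ agree (they all carry a one) in the deleted coordinate, their pairwise distances are unchanged, so the minimum distance remains $\ge d$ and the deletion map is injective. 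Therefore $|\mathcal C_p|\le T(\{(w'_i,n'_i)\}_{i=1}^t,d)$, and summing over the $n_i$ coordinates of $B_i$ yields $N\le n_i\,T(\{(w'_i,n'_i)\}_{i=1}^t,d)$.

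Combining the two expressions for $N$ gives $w_i|\mathcal C|\le n_i\,T(\{(w'_i,n'_i)\}_{i=1}^t,d)$, i.e.\ $|\mathcal C|\le\frac{n_i}{w_i}T(\{(w'_i,n'_i)\}_{i=1}^t,d)$; since $|\mathcal C|$ is an integer I may replace the right-hand side by its floor, which is exactly the claimed inequality. The only point requiring care — and it is a very mild one — is the verification that deleting a coordinate on which all selected codewords carry a one leaves all mutual distances and the weights of the remaining blocks intact, so that the restricted family is genuinely a multiply constant-weight code of minimum distance at least $d$; everything else is bookkeeping.
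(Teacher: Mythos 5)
Your proof is correct, and it is precisely the argument the paper has in mind: the paper omits the proof of this proposition, stating only that it is ``similar to those for $A(n,d,w)$ or $T(w_1,n_1,w_2,n_2,d)$,'' and the classical proof of $A(n,d,w)\le\left\lfloor \frac{n}{w}A(n-1,d,w-1)\right\rfloor$ is exactly the double-counting/shortening argument you carried out, correctly localized to the block of $n_i$ coordinates. Nothing is missing; in particular you rightly verified that deleting a coordinate shared by all codewords of $\mathcal C_p$ preserves pairwise distances, which is the only point where care is needed.
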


\begin{proposition}
If $w_i < n_i$, then
\begin{eqnarray}
T(\{(w_i,n_i)\}_{i=1}^t,d) \leq \left \lfloor \frac{n_i}{n_i-w_i} T(\{(w'_i,n'_i)\}_{i=1}^t,d) \right \rfloor,
\end{eqnarray}
where $\{(w'_i,n'_i)\}_{i=1}^t$ is obtained from $\{(w_i,n_i)\}_{i=1}^t$ by replacing the pair $(w_i,n_i)$ by $(w_i, n_i -1)$.
\end{proposition}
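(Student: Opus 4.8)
The plan is to prove this by a double-counting (shortening) argument applied to the $i$th block of coordinates, running exactly parallel to the proof of the preceding proposition but counting zeros instead of ones. First I would take $\mathcal C$ to be a $(\{(w_i,n_i)\}_{i=1}^t,d)$ multiply constant-weight code of maximum size $M = T(\{(w_i,n_i)\}_{i=1}^t,d)$ and fix the block index $i$. Every codeword has exactly $w_i$ ones, hence exactly $n_i - w_i$ zeros, among the $n_i$ coordinates of block $i$. The object to count is the set of pairs $(X,c)$ with $X \in \mathcal C$ and $c$ a coordinate of block $i$ at which $X$ has a zero.

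Counting by codewords first gives $M(n_i - w_i)$ such pairs. Counting by coordinates, I would fix one of the $n_i$ coordinates $c$ of block $i$ and let $\mathcal C_c$ be the set of codewords having a zero at $c$. Puncturing coordinate $c$ from every member of $\mathcal C_c$ yields a code in which block $i$ now has $n_i - 1$ coordinates while its weight is still $w_i$ (a zero was removed), all other blocks unchanged; the minimum distance does not drop below $d$ because the deleted coordinate carried the value $0$ in every codeword of $\mathcal C_c$, so no pairwise distance is affected. Thus the punctured code is a multiply constant-weight code with the pair $(w_i,n_i)$ replaced by $(w_i,n_i-1)$, whence $|\mathcal C_c| \leq T(\{(w'_i,n'_i)\}_{i=1}^t,d)$. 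Summing over the $n_i$ coordinates of block $i$ gives $M(n_i - w_i) \leq n_i\, T(\{(w'_i,n'_i)\}_{i=1}^t,d)$; since $w_i < n_i$ the factor $n_i - w_i$ is a positive integer, so dividing and using integrality of $M$ yields
\begin{eqnarray}
\nonumber M \leq \left\lfloor \frac{n_i}{n_i - w_i}\, T(\{(w'_i,n'_i)\}_{i=1}^t,d) \right\rfloor,
\end{eqnarray}
which is the claim.

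A quicker derivation is available from results already in hand. By Proposition \ref{append} (iii) the quantity $T$ is unchanged under replacing $w_i$ by $n_i - w_i$; applying the preceding proposition (the ``ones'' shortening, valid since $n_i - w_i > 0$) to the code with $w_i$ replaced by $n_i - w_i$, and then invoking Proposition \ref{append} (iii) once more to rewrite the resulting $i$th weight $(n_i-1)-(n_i-w_i-1)=w_i$, reproduces precisely the stated inequality. I expect the only point needing care in either route to be the routine verification that puncturing a coordinate identically zero on the selected subcode neither decreases any pairwise Hamming distance nor alters the block-weight profile except for shrinking $n_i$ by one; once this is confirmed, the bound is immediate.
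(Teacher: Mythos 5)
Your proof is correct, and since the paper omits the proof of this proposition (saying only that it is ``similar to those for $A(n,d,w)$ or $T(w_1,n_1,w_2,n_2,d)$''), your double-counting/puncturing argument on a coordinate that is zero---which preserves distances and block weights exactly as you verify---is precisely the standard argument the paper is invoking. Your alternative derivation, combining the complementation property (iii) of Proposition~\ref{append} with the preceding proposition, is also valid and equally faithful to the paper's intent.
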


% or
%\appendix  % for no appendix heading
% do not use \section anymore after \appendix, only \section*
% is possibly needed

% use appendices with more than one appendix
% then use \section to start each appendix
% you must declare a \section before using any
% \subsection or using \label (\appendices by itself
% starts a section numbered zero.)
%

%\appendices
%\section{Proof of the First Zonklar Equation}
%Appendix one text goes here.

% you can choose not to have a title for an appendix
% if you want by leaving the argument blank
%\section{}
%Appendix two text goes here.

% use section* for acknowledgement
%\section*{Acknowledgment}

%The authors would like to thank...

% Can use something like this to put references on a page
% by themselves when using endfloat and the captionsoff option.
\ifCLASSOPTIONcaptionsoff
  \newpage
\fi


\begin{thebibliography}{10}
\bibitem{d} P. Delsarte, ``An algebraic approach to the association schemes of coding theory," \emph{Philips Res. Repts. Suppl.}, no. 10, 1973.
\bibitem{ms} F. J. MacWilliams and N. J. A. Sloane, \emph{The Theory of Error-Correcting Codes}. Amsterdam, The Netherlands: North-Holland, 1977.
\bibitem{s} A. Schrijver, ``New code upper bounds from the Terwilliger algebra and semidefinite programming," \emph{IEEE Trans. Inf. Theory}, vol. 51, no. 8, pp. 2859--2866, Aug. 2005.
\bibitem{mel} B. Mounits, T. Etzion, and S. Litsyn, ``Improved upper bounds on sizes of codes," \emph{IEEE Trans. Inf. Theory}, vol. 48, no. 4, pp. 880--886, Apr. 2002.
\bibitem{avz} E. Agrell, A. Vardy, and K. Zeger, ``Upper bounds for constant-weight-codes," \emph{IEEE Trans. Inf. Theory}, vol. 46, no. 7, pp. 2373--2395, Nov. 2000.
\bibitem{kkt} B. G. Kang, H. K. Kim, and P. T. Toan, ``Delsarte's linear programming bound for constant-weight codes," \emph{IEEE Trans. Inf. Theory}, vol. 58, no. 9, pp. 5956--5962, Sep. 2012.
\bibitem{bbm} M. R. Best, A. E. Brouwer, F. J. MacWilliams, A. M. Odlyzko, and N. J. A. Sloane, ``Bounds for binary codes of length less than 25," \emph{IEEE Trans. Inf. Theory}, vol. IT-24, no. 1, pp. 81-–93, Jan. 1978.
\bibitem{gms} D. C. Gijswijt, H. D. Mittelmann, and A. Schrijver, ``Semidefinite code bounds based on quadruple distances," \emph{IEEE Trans. Inf. Theory}, vol. 58, no. 5, pp. 2697--2705, May 2012.
\bibitem{bsss} A. E. Brouwer, J. B. Shearer, N. J. A. Sloane, and W. D. Smith, ``A new table of constant weight codes," \emph{IEEE Trans. Inf. Theory}, vol. 36, no. 6, pp. 1334–-1380, Nov. 1990.
\bibitem{o} P. R. J. \"{O}sterg\aa rd, ``Classification of binary constant weight codes," \emph{IEEE Trans. Inf. Theory}, vol. 56, no. 8, pp. 3779--3785, Aug. 2010.


%\bibitem{b} A.E. Brouwer, Table of general binary codes. [Online]. Available: http://www.win.tue.nl/~aeb/codes/binary-1.html
%\bibitem{mel} Beniamin Mounits, Tuvi Etzion, Simon Litsyn, ``Improved upper bounds on sizes of codes", \emph{IEEE Trans. Inform. Theory}, vol. 48, pp. 880--886, %Apr. 2002 .
%\bibitem{b1} A.E. Brouwer, Bounds for binary constant weight codes.
%[Online]. Available: http://www.win.tue.nl/~aeb/codes/Andw.html
%\bibitem{avz2} E. Agrell, A. Vardy, and K. Zeger. Tables of binary block codes. [Online]. Available: http://webfiles.portal.chalmers.se/s2/research/kit/bounds/
%\bibitem{ms} F.J. MacWilliams and N.J.A. Sloane, \emph{The Theory of Error-Correcting Codes}. Amsterdam, The Netherlands: North-Holland, 1977.
%\bibitem{bbm} M.R. Best, A.E. Brouwer, F.J. MacWilliams, A.M. Odlyzko, and N.J.A. Sloane, ``Bounds for binary codes of length less than 25", \emph{IEEE Trans. %Inform. Theory}, vol. IT-24, pp. 81-–93, Jan. 1978.
%\bibitem{h} I. Honkala, ``Bounds for binary constant weight and covering codes", Licentiate thesis, Dept. Math., Univ. of Turku, Turku, Finland, Mar. 1987.
%\bibitem{avz} E. Agrell, A. Vardy, and K. Zeger, ``Upper bounds for constant-weight-codes", \emph{IEEE Trans. Inf. Theory}, vol. 46, pp. 2373--2395, Nov. 2000. %%IEEE Trans. Inf. Theory, 46 (7) (2000) 2373--2395.
%\bibitem{s} A. Schrijver, ``New code upper bounds from the Terwilliger algebra and semidefinite programming", \emph{IEEE Trans. Inf. Theory}, vol. 51, pp. 2859--2866, Aug. 2005.
%\bibitem{o} P.R.J. \"{O}sterg\aa rd, ``Classification of binary constant weight codes", \emph{IEEE Trans. Inf. Theory}, vol. 56, pp. 3779--3785, Aug. 2010.
%\bibitem{bsss} A.E. Brouwer, J.B. Shearer, N.J.A. Sloane, and W.D. Smith, ``A new table of constant weight codes", \emph{IEEE Trans. Inform. Theory}, vol. 36, pp. 1334–-1380, Nov. 1990.
%\bibitem{gs} R.L. Graham and N.J.A. Sloane, ``Lower bounds for constant weight codes", \emph{IEEE Trans. Inform. Theory}, vol. IT-26, pp. 37-–43, Jan. 1980.
%\bibitem{c} M.J. Colbourn, ``Some new upper bounds for constant weight codes", IEEE Trans. Inform. Theory, vol. IT-26, p. 478, 1980.


%\bibitem{IEEEhowto:kopka}
%H.~Kopka and P.~W. Daly, \emph{A Guide to \LaTeX}, 3rd~ed.\hskip 1em plus
%  0.5em minus 0.4em\relax Harlow, England: Addison-Wesley, 1999.

\end{thebibliography}
\end{document}